\newtheorem{definition}{Definition}
\newtheorem{theorem}{Theorem}
\newtheorem{lemma}{Lemma}
\newtheorem{proposition}{Proposition}
\def\BibTeX{{\rm B\kern-.05em{\sc i\kern-.025em b}\kern-.08em
    T\kern-.1667em\lower.7ex\hbox{E}\kern-.125emX}}
\begin{document}

\title{\Huge A Cooperative Statistical Approach for Abnormal Node Detection with Adversary Resistance}

\author{
\IEEEauthorblockN{Yingying Huangfu\IEEEauthorrefmark{1} and 
Tian Bai\IEEEauthorrefmark{2}}
\IEEEauthorblockA{\IEEEauthorrefmark{1}Shield Lab, Huawei Singapore Research Center, Singapore, Singapore\\
\IEEEauthorrefmark{2}School of Computing and Data Science, The University of Hong Kong, Hong Kong, China\\
E-mails:~yingying.huangfu@outlook.com, tianbai@hku.hk, \IEEEauthorrefmark{2}\textit{Corresponding author}
}}

\maketitle

\begin{abstract}

Distinguishing abnormal nodes from those with normal packet loss in clusters helps reduce the loss of clustered network resources. 
The detection performance of existing detection schemes is limited by the techniques to quantify node behaviors, and most schemes cannot avoid being misled by the falsified information. 
This paper presents a novel probabilistic abnormal node detection scheme CSD -- Cooperative Statistical Detection -- for accurate and efficient detection in the existence of falsified detection data in clustered networks. 
Specifically, employing the likelihood ratio test (LRT) based detection method to measure node forwarding behaviors, we propose a modified Z-score based falsification-resistant mechanism to filter out falsifications. 
We show that both the false alarm and missed detection probabilities can decrease \textit{exponentially} if and only if the transmissions from the nodes falsifying the data are less than half of the total.
Furthermore, the optimal threshold of the modified Z-score method is derived, which guarantees perfect detection of our CSD under \textit{any} falsification strategy in the proposed detection model.
Evaluation results validate the effectiveness, robustness, and superiority of our scheme compared to the state-of-the-art.

\end{abstract}

\begin{IEEEkeywords}
    clustered network, abnormal node detection, adversarial attacks, statistical testing.
\end{IEEEkeywords}

\vspace{-0.2cm}
\section{Introduction}

In short-distance communications, including smart homes, multi-agent systems, and health monitoring systems, the performance of packet forwarding is vital for guaranteeing quality of service requirements~\cite{LM23,KF23}.
The network arranged in a clustered topology can simplify the routing process and support various applications, such as data fusion and query processing within the network~\cite{aa21,dc20}.
However, with such great potential, reaping the benefits of the clustered topology still needs to overcome the issues of anomalous forwarding behaviors caused by abnormal nodes, which is common in peer-to-peer network architecture~\cite{sp21, AB21}.

Currently, the main strategy for discovering in-network anomalous behaviors is to implement abnormal node detection during packet forwarding along the routing path.  
In general, the majority of existing routing protocols assume that nodes follow the principles of the protocols entirely acting in altruistic behaviors.
However, due to malevolent or selfish characteristics, abnormal nodes can refuse to forward data packets deliberately, a.k.a.\ routing misbehavior.
Furthermore, in-network abnormal nodes consistently disrupt communication between nodes by dropping transit traffic, resulting in significant packet loss and missing data.
Due to the limited communication resources and hierarchically aggregated data, such abnormal nodes can seriously damage clustered networks.
Besides, knowledge of abnormal nodes is always uncertain, leading to stumbling blocks in establishing security links in distributed networking systems~\cite{at22}.
Therefore, it is of great necessity to identify abnormal nodes with routing misbehavior to secure routing, ensure data integrity, and stabilize network performance.

The problem of abnormal node detection based on packet loss has been studied for decades. 
In recent years, researchers have turned to \textit{statistics-based} approaches focusing on analyzing the behaviors of abnormal nodes. 
Statistical hypothesis testing~\cite{sg06}, as a typical statistical technique, provides theoretical foundations for abnormal detection without the need to modify network protocols or presume the nodes' credibility. 
However, most statistical detection schemes~\cite{gs17,at19, hz22} require massive forwarding observations for analysis, which inevitably increases the cost in low-resource short-distance networks and, as a result, decreases the detection efficiency. 
Recently, the method based on the likelihood ratio test~(LRT) presented in~\cite{HZ21}, which estimates the real abnormal packet loss rate~(PLR) as a default value, was adopted to enhance detection efficiency.
This method proposed a two-phase detection, an effective technique commonly used for identifying abnormal nodes in clustered networks.
The addressed problem is a distributed detection problem where some nodes first make a local decision on the clustered head, and later the detection data of local decisions are aggregated at a trusted node, e.g., the fusion center~(FC), for the overall decision.
In the second phase, malicious nodes could deceive the overall detector and confound its decision on anomaly detection, e.g., by purposely and cooperatively falsifying the detection data to be the opposite. 
In such circumstances, the reliability of the detection data becomes questionable.
In this work, we aim to resolve the limitations of the aforementioned work and propose an integrated \textbf{C}ooperative \textbf{S}tatistical \textbf{D}etection (CSD) scheme to resist the falsification of detection data from malicious detectors. 

\vspace{-0.2cm}
\subsection{State of the Art}

\subsubsection{Regular Abnormal Node Detection}
The authors in~\cite{mg00} described a simple conceptual framework of the trust-based abnormal node detection method from the perspective of packet loss, but did not propose any specific trust algorithm.
In~\cite{rz16}, the detailed trust tables were established at each node to evaluate neighbor forwarding behaviors and integrate trust values to identify abnormal nodes.
A dynamic trust framework was designed in~\cite{wl18} to adopt a weighted summation method to update the trust values.
However, the solutions presented in~\cite{rz16,wl18} cannot accurately characterize the anomaly behavior of abnormal nodes.
Recently, researchers have turned to machine learning based methods that achieved promising detection performance, but their application in the real world was limited due to insufficient training data and led to vast communication overhead~\cite{et18, BT23, ZA23}.
The sequential probability ratio test and Hoeffding test were adopted to effectively detect abnormal nodes under independent and identically distributed~(i.i.d.) traffic~\cite{gs17,hf22}. 
However, they require a large number of behavior samples and take a long time to make decisions.
Then, a two-phase distribution mechanism was proposed to quickly identify abnormal behavior by estimating trust values sent to the trusted node~\cite{kz17}.
Some other works~\cite{at19,HZ21,hf22} followed a similar idea but based on the generalized or the standard LRTs.
However, these solutions focused on identifying anomalies, leaving out the adversarial attack issue that malicious nodes produce falsifications to mislead the detection and significantly reduce detection performance.

\subsubsection{Abnormal Node Detection with Adversarial Strategy}

The latest research considered adversarial scenarios of malicious detectors that falsified detection data.
% Several abnormal node detection schemes considered adversarial scenarios of malicious sensor nodes that falsified detection data.
The authors in~\cite{bf19} modified the trust-based method to guarantee detection performance against malicious nodes by removing untrustworthy data, whereas it cannot effectively distinguish falsified data due to the empirical evaluation mechanism.
The solution in~\cite{ll19} collects the trust value from the neighboring nodes to avoid being misled by falsifications, whereas it did not consider cooperative attacks of the neighboring nodes.
In~\cite{ac18}, the authors proposed an abnormal node detection method that can find falsified data by detecting cooperative misbehavior.
The work of~\cite{rz16} used the standard deviation of trust values to identify falsifications during the detection process.
Despite many benefits in~\cite{ll19,ac18,rz16,bf19}, these solutions did not demonstrate their effectiveness in clustered networks.

\subsubsection{Limitations}

Taken together, the two-phase detection~\cite{kz17,wl18,pr20,at19} was widely used to identify abnormal nodes by aggregating the detection data.
However, most solutions lose effectiveness in adversarial scenarios, which must decrease detection performance.
Furthermore, abnormal node detection schemes that considered the adversarial strategy did not demonstrate their effectiveness in clustered networks.
Unlike existing work, this paper considers the adversarial attacks in the abnormal node detection process for clustered networks where the malicious detectors know the whole detection scheme.
In addition, this work aims to guarantee the \textit{perfect detection} of abnormal node detection in both regular and adversarial scenarios, i.e., the ideal detection for both the false alarm probability~(FAP) and the missed detection probability~(MDP) approach zero exponentially.

\subsection{Contributions}

%\textcolor{blue}{This work aims to develop a detection approach of the perfect detection for abnormal nodes in clustered networks with i.i.d.\ traffic, even under an \textit{unknown} distribution. This method takes into account the adversarial situation where malicious nodes falsify detection information to deceive the detector.}
This work aims to develop a detection approach for the perfect detection of abnormal nodes in clustered networks with i.i.d.\ traffic under an \textit{unknown} distribution, taking into account the adversarial situation where malicious nodes falsify detection information to deceive the detector.
To tackle these issues, this paper presents a novel CSD scheme, which utilizes the designed LRT and the modified Z-score to detect anomalous packet loss and identify falsified detection data, respectively.
The performance of the proposed detection is robust, even if the malicious node has complete knowledge of the detection strategy.
In particular, the proposed CSD improves detection accuracy and efficiency in clustered networks, whether in regular or adversarial environments.
%\textcolor{orange}{The performance of the proposed detection can be guaranteed even if the malicious node has complete knowledge of the detection strategy.
%In particular, the proposed CSD is a robust abnormal node detection method that improves detection accuracy and efficiency in clustered networks of both regular and adversarial environments.}
The main contributions of this paper are summarized as follows:

\begin{itemize}
    \item
    We propose a novel CSD to detect anomalies from the perspective of packet loss by aggregating the detection data w.r.t. the observed forwarding behaviors of nodes.
    We embed a modified Z-score method in the CSD to distinguish falsified data from malicious attacks by normalizing the detection data.
    \item
    We provide a comprehensive theoretical analysis with respect to the proposed CSD in detection accuracy and efficiency.
    We show that both the FAP and the MDP of the CSD decay to zero exponentially.
    Furthermore, we derive the optimal removal threshold of the modified Z-score for falsifications with arbitrary cooperative strategies.
    \item
    We evaluate the performance of the proposed CSD through numerical evaluations, achieving extremely high detection accuracy of $99.95\%$ and $99\%$ in regular and adversarial scenarios, respectively.
    The simulation results show that the proposed robust CSD outperforms state-of-the-art methods in efficiency and effectiveness.
\end{itemize}

\section{Preliminaries}

\subsection{Network and Threat Model}\label{sec:pre:model}

In the clustered network, nodes are often grouped into disjoint clusters to aggregate data hierarchically to an FC.
Without loss of generality, consider one cluster with a cluster head $v_{m}$, usually the node with high connectivity, and a set of cluster nodes $\mathcal{C}_{m}$. 
The links in the clustered network are determined by the routes that can be established by the standardized IPv6 routing protocol~\cite{rpl12,op16}.

The traffic from the cluster nodes to the cluster head is upward routing.
Each cluster node transmits a different number of packets to the cluster head through upward routing.
In general, cluster nodes can monitor and distinguish the forwarding packets of their cluster head by the well-known monitoring technique of Watchdog or the Acknowledgement mechanisms~\cite{mg00,tl18,ie16}.
This work assumes that the channel between two nodes is discrete, memoryless, and error-free.
In addition, it is assumed that there exists a side channel from each cluster node to a \textit{trusted node} that is usually the gateway, the border router, or the FC.

In normal transmissions, packet loss occurs at nodes due to their protocol settings and the limited capacity of the receiving queues. 
Following with~\cite{rz16, at19}, this work assumes that the probability of packet loss of a specific node~(e.g., the cluster head) is known as $q_{\mathrm{n}} \in (0, 1)$, different for different nodes. 

Abnormal cluster heads aim to disrupt the communication between the FC and cluster nodes.
Thus, it is assumed that an abnormal cluster head drops the packets transmitted through itself with a higher probability than normal.
The packet loss caused by the abnormal one is caused by the network environment and routing misbehavior independently.
Let $q_{\mathrm{a}}$ be the probability of the \textit{deliberate} packet loss, which is independent of $q_{\mathrm{n}}$.
The total probability of packet loss of an abnormal cluster head is denoted by $q$ and is subject to: 
\begin{equation}
    q=1-(1-q_{\mathrm{n}})(1-q_{\mathrm{a}})=q_{\mathrm{n}}+q_{\mathrm{a}}-q_{\mathrm{n}}q_{\mathrm{a}}.
\label{sys:abnormal rate}
\end{equation}

This work considers the two-phase mechanism that is widely used to detect abnormal nodes in clustered networks~\cite{kz17,wl18,pr20,at19}.
A succinct characterization of the two-phase detection mechanism is as follows: 
\begin{itemize}
    \item \textbf{The local detection phase:}~Each cluster node generates detection data, i.e., the log-likelihood ratio~(LLR) in the LRT, w.r.t. the behavior of the cluster head;
    \item \textbf{The overall detection phase:}~The overall detector, i.e., the trusted node, makes the decision based on the aggregated detection data from the local detection phase.
\end{itemize}

We consider an adversarial scenario as follows. Based on the two-phase detection mechanism, malicious cluster nodes can falsify the detection data to prevent their anomalous cluster head from being detected or frame their normal cluster head as an abnormal node.
In particular, the malicious cluster node knows the identity of its corresponding cluster head through observation.
Fig.~\ref{sys:falsification} illustrates the two considered cases of adversarial attacks.
The solid lines indicate upward routing paths, and the dashed lines denote the monitoring processes of the forwarding packets.

Suppose that the cluster head $v_{m}$ is the node to be checked, and $|\mathcal{C}_{m}|$ is the number of its cluster nodes. 
In \textbf{Case I}, $v_{m}$ is the abnormal node that aims to disrupt communication, and $\mathcal{C}_{m}^{(1)}=\{\,v_1,\ldots,v_j\,\}$ is the set of malicious cluster nodes that generate falsified detection data to claim node $v_{m}$ is normal.
$\mathcal{C}_{m}^{(0)}=\{\,v_{j+1},\ldots,v_{|\mathcal{C}_{m}|}\,\}$ is the set of benign cluster nodes providing true detection data.
In \textbf{Case II}, $v_{m}$ is the normal node with no incentive to disrupt communication, and malicious cluster nodes in $\mathcal{C}_{m}^{(1)}$ generate falsified detection data to frame $v_{m}$ as an abnormal node. 
Benign cluster nodes in $\mathcal{C}_{m}^{(0)}$ claim that node $v_{m}$ is normal.
Details of the specific falsification strategy are introduced in Section~\ref{sec:soulition:z-score}.

\begin{figure}[!tbp]
    \centering
    %\vspace{-0.2cm}
    \includegraphics[height=3.6cm]{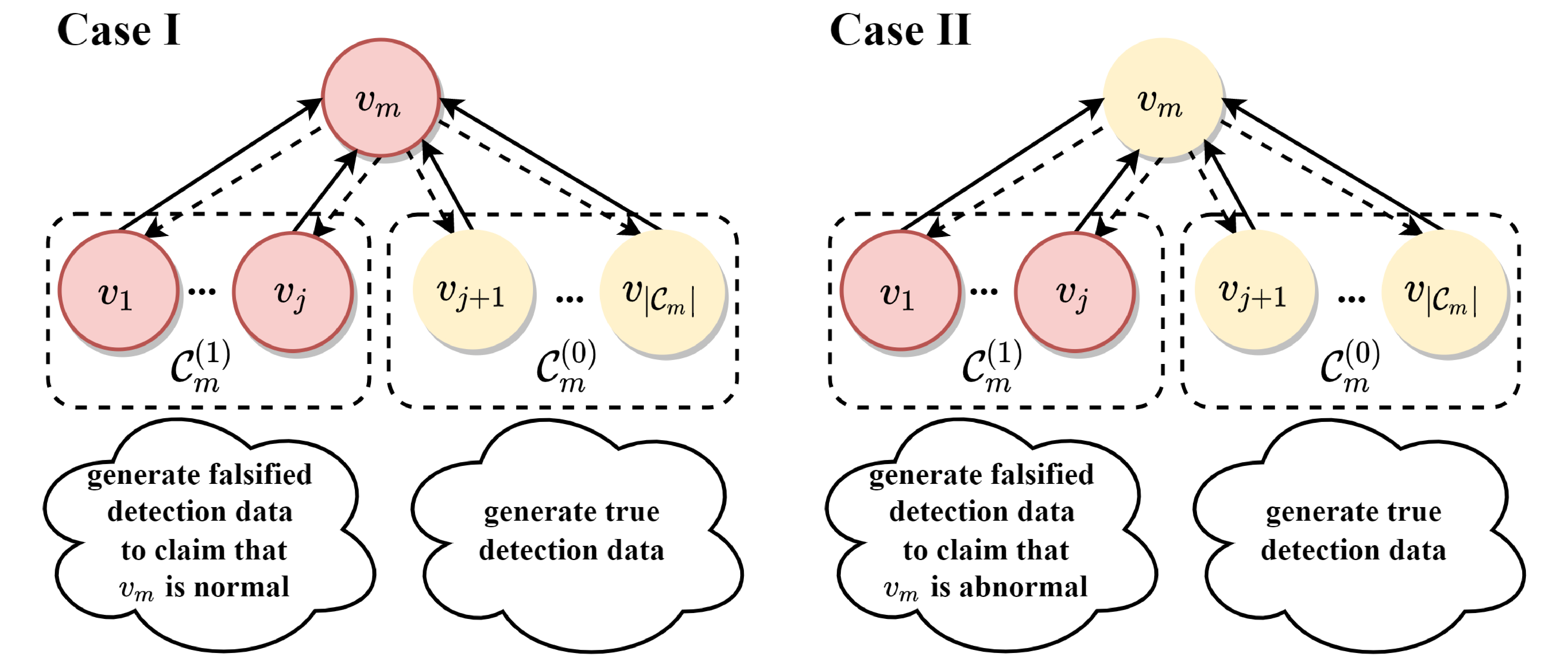}
    \caption{The principle of adversarial attacks. $v_{m}$: the cluster head to be checked if abnormal; $v_1,\ldots,v_j$: malicious cluster nodes; $v_{j+1},\ldots,v_{|\mathcal{C}_{m}|}$: benign cluster nodes.}
    %\vspace{-0.3cm}
    \label{sys:falsification}
\end{figure}

\subsection{Decision Rules of the LRT-based Method}

It can be seen from Section~\ref{sec:pre:model} that the PLR is $q_{\mathrm{n}}$ when the cluster head operates in normal mode. 
However, the PLR $q$ is higher than $q_{\mathrm{n}}$ when the described routing misbehavior occurs. 
Thus, the PLR is a crucial indicator of the presence of the abnormal node.
Due to the difficulty of directly obtaining the knowledge of abnormal nodes, the previous work~\cite{HZ21} uses a detecting PLR $q_{\mathrm{d}}$ which is determined by Eq.~\eqref{sys:abnormal rate} given a possible threshold of $q_{\mathrm{a}}$. 
Let $q_m$ be the PLR of $v_{m}$, then the method performs an LRT of the following binary hypotheses:
\begin{equation}
\begin{aligned} 
    H_0: q_m=q_{\mathrm{n}},
    \quad&&\quad
    H_1: q_m=q_{\mathrm{d}}.
\end{aligned}
\end{equation}

In this work, the duration of the abnormal detection operation is divided into a sequence of detection periods, while each period maintains an independent two-phase detection.

In the $\tau$-th detection period for integer $\tau \in [1, t]$, the number of packets from each cluster node $v_{i} \in \mathcal{C}_{m}$ to $v_{m}$, denoted by $\ell_{i}(\tau)$, is an i.i.d.\ bounded random variable with the mean $\mu_{i}$ and the variance $\sigma^2_i$ for any $i = 1,2,\ldots, |\mathcal{C}_{m}|$.
We notice that our model is a generalization, as the number of packets follows a probabilistic setting.
Let $k_i(\tau)$ be the number of the dropped operation of $v_{m}$ observed from $v_{i}$ in the $\tau$-th period.
Then the decision rule of the LRT in the overall phase is
\begin{equation}
    \Lambda_m(t)= \sum\nolimits_{v_{i} \in \mathcal{C}_{m}} \lambda_{i}(t)~~ \underaccent{H_0}{\accentset{H_1}{\gtrless}} ~~ \gamma,
    \label{CSD:overall decision}
\end{equation}
where $\lambda_{i}(t)$, the LLR generated by $v_{i}$ in the $t$-th detection period in the local phase, can be derived as
\begin{equation}
    \log \left( \frac{1 - q_{\mathrm{d}}}{1-q_{\mathrm{n}}} \right) \sum_{\tau = 1}^t \ell_{i}(\tau) + \log \left( \frac{q_{\mathrm{d}}(1-q_{\mathrm{n}})}{q_{\mathrm{n}}(1-q_{\mathrm{d}})} \right) \sum_{\tau = 1}^t k_{i}(\tau),
    \label{CSD:llr}
\end{equation}
$\Lambda_m(t)$ is the cumulative LLR of $v_{m}$ in the $t$-th period, and $\gamma \in \mathbb{R}$ is the detection threshold under some certain criterion~(e.g., the maximum a posteriori probability criterion).
That is, $H_0$ is rejected if and only if $\Lambda_m(t) \geq \gamma$, which is the standard for identifying abnormal nodes with high PLRs.

\subsection{Problem Definition} \label{Problem}

Following~\cite{ac18,gs17,rz16,at19}, this work uses FAP and MDP to evaluate the detection performance.
In particular, FAP, also known as the \textit{Type I} error, is the probability that a normal node is incorrectly identified as an abnormal node.
The probability of deciding an abnormal node as a normal node is MDP, that is, \textit{Type II} error. 
Denote $\eta_{1,0}(t)$ and $\eta_{0,1}(t)$ as the FAP and the MDP in the $t$-th detection period, respectively, where $t \geq 1$ is the number of detection periods.
This paper considers the perfect detection~\cite{zx21} defined as follows:

\begin{definition}[Perfect Detection] \label{def: perfect dection}
A detection method is said to achieve perfect detection, if both $\eta_{1,0}(t)$ and $\eta_{0,1}(t)$ approach 0 \textit{exponentially} when $t$ tends to infinity, i.e.,
\begin{equation}    
    \begin{aligned} 
    \eta_{1,0}(t)=\exp{(-\mathcal{O}(t))},&& \eta_{0,1}(t)=\exp{(-\mathcal{O}(t))}.
    \end{aligned}
\end{equation}
\end{definition}

If the perfect detection is achieved, for any $\varepsilon > 0$, there exists a positive real number $T$ such that for all $t>T$, $\eta_{0,1}(t),\eta_{1,0}(t) < \varepsilon$ holds. Moreover, the detection efficiency is reflected by the decay rates of $\eta_{0,1}(t)$ and $\eta_{1,0}(t)$, that is, the asymptotic order of the two.

In addition, to theoretically analyze the perfect detection, we introduce the following definition.

\begin{definition}[Critical Detection Point]
Let $k_0$ be the critical value~(i.e., the root of $\Lambda_{m}(t) = \gamma$) of LRT and $\ell$ be the total number of packets in the first $t$ detection periods.
The critical detection point of LRT denoted as $\beta$ is defined as the limitation of $k_0/\ell$ as $\ell$ tends to infinity, i.e.,
\begin{equation}
    \beta = \lim_{\ell \to \infty} \frac{k_0}{\ell} = \log \left( \frac{1-q_{\mathrm{n}}}{1-q_{\mathrm{d}}} \right) \bigg/ \log \left( \frac{q_{\mathrm{d}}(1-q_{\mathrm{n}})}{q_{\mathrm{n}}(1-q_{\mathrm{d}})} \right).
    \label{CSD:critical detection point}
\end{equation}
\end{definition}

One can verify that $q_{\mathrm{n}} < \beta < q_{\mathrm{d}}$. We may have $k_0 > \ell$ for sufficiently small $\ell$.
Here we define $\binom{\ell}{k} = 0$ when $k > \ell$.

The critical detection point characterizes the perfect detection performance of the standard LRT-based detection method~\cite{LG18, Dembo2010}.
Specifically, we have the following results~\cite{HZ21}.

\begin{proposition}\label{proposition:beta}
    In regular scenarios, the perfect detection of the proposed LRT-based detection method is guaranteed if and only if the PLR of the abnormal node ($q_{m}$) exceeds the critical detection point ($\beta$), that is, $q_{m} > \beta$, regardless of the distribution of the number of packets.
\end{proposition}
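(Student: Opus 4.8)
The plan is to collapse the aggregated LRT of \eqref{CSD:overall decision} to a one–dimensional test on the total drop count, and then to control the false–alarm and missed–detection probabilities by a large–deviations argument. Write $\ell=\sum_{v_i\in\mathcal{C}_m}\sum_{\tau=1}^{t}\ell_i(\tau)$ for the total number of transiting packets and $k=\sum_{v_i\in\mathcal{C}_m}\sum_{\tau=1}^{t}k_i(\tau)$ for the total number of observed drops over the first $t$ periods. Summing \eqref{CSD:llr} over $\mathcal{C}_m$ gives $\Lambda_m(t)=c_0\,\ell+c_1\,k$ with $c_0=\log\frac{1-q_{\mathrm{d}}}{1-q_{\mathrm{n}}}<0$ and $c_1=\log\frac{q_{\mathrm{d}}(1-q_{\mathrm{n}})}{q_{\mathrm{n}}(1-q_{\mathrm{d}})}>0$, so the test rejects $H_0$ — equivalently declares $v_m$ abnormal — exactly when $k\ge k_0$ with $k_0=\beta\ell+\gamma/c_1$; this is the critical value of \eqref{CSD:critical detection point}, and $k_0/\ell\to\beta$. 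Under the memoryless error-free model (each transiting packet dropped independently with probability $q_{\mathrm{n}}$ in normal mode and $q_m$ in abnormal mode, distinct cluster nodes monitoring disjoint packet streams through $v_m$), the conditional law depends on $\{\ell_i(\tau)\}$ only through $\ell$: $k\mid\ell\sim\mathrm{Bin}(\ell,q_{\mathrm{n}})$ for the FAP and $k\mid\ell\sim\mathrm{Bin}(\ell,q_m)$ for the MDP.

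Next I would pin down the growth of $\ell$. Since each $\ell_i(\tau)$ is bounded, the periods are i.i.d., and $\bar\mu:=\sum_{v_i\in\mathcal{C}_m}\mu_i>0$, the count $\ell$ is a sum of $\Theta(t)$ independent bounded terms with mean $t\bar\mu$, so Hoeffding's inequality yields $\Pr(\ell<t\bar\mu/2)\le e^{-\Omega(t)}$. Hence, up to an additive $e^{-\Omega(t)}$ term, both error probabilities may be evaluated on the event $\mathcal{E}_t=\{\ell\ge t\bar\mu/2\}$, on which $\ell$ grows linearly in $t$.

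For the ``if'' direction I would condition on $\ell$ and invoke the Chernoff--Hoeffding bound $\Pr(\mathrm{Bin}(\ell,p)\ge \ell x)\le e^{-\ell\, d(x\|p)}$ for $x>p$, and $\Pr(\mathrm{Bin}(\ell,p)\le \ell x)\le e^{-\ell\, d(x\|p)}$ for $x<p$, where $d(x\|p)=x\log\frac{x}{p}+(1-x)\log\frac{1-x}{1-p}$ is continuous in $x$ and strictly positive for $x\ne p$. Since $k_0/\ell=\beta+\gamma/(c_1\ell)\to\beta$ and $q_{\mathrm{n}}<\beta$ always holds (as noted right after \eqref{CSD:critical detection point}), for $\ell$ large $\Pr(k\ge k_0\mid\ell)\le e^{-\ell\, d(\beta\|q_{\mathrm{n}})/2}$, so on $\mathcal{E}_t$ one gets $\eta_{1,0}(t)=e^{-\Omega(t)}$; this step uses only $q_{\mathrm{n}}<\beta$, so the FAP is never the bottleneck. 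If in addition $q_m>\beta$, then $k_0/\ell\to\beta<q_m$ and the lower-tail bound gives $\Pr(k<k_0\mid\ell)\le e^{-\ell\, d(\beta\|q_m)/2}$ for $\ell$ large, hence $\eta_{0,1}(t)=e^{-\Omega(t)}$; both errors then decay exponentially, which is perfect detection in the sense of Definition~\ref{def: perfect dection}.

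The ``only if'' direction is the crux, and it fails through the MDP. If $q_m<\beta$, choose $\epsilon>0$ with $q_m+\epsilon<\beta$; the weak law of large numbers gives $\Pr(k<(q_m+\epsilon)\ell\mid\ell)\to1$, while $(q_m+\epsilon)\ell<k_0$ once $\ell$ is large, so $\Pr(k<k_0\mid\ell)\to1$, and since $\Pr(\mathcal{E}_t)\to1$ we get $\eta_{0,1}(t)\to1$. The delicate case is $q_m=\beta$, where $\mathbb{E}[k\mid\ell]=\beta\ell$ coincides with $k_0$ up to the constant $\gamma/c_1$: here the fluctuations of $k$ are of order $\sqrt\ell$, and the central limit theorem, with a Berry--Esseen estimate to absorb the lattice correction, gives $\Pr(k<k_0\mid\ell)\to\Phi(0)=\tfrac12$, so $\eta_{0,1}(t)\to\tfrac12$. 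In either sub-case $\eta_{0,1}(t)\not\to0$, so perfect detection cannot hold, establishing the equivalence. I expect this converse — especially the boundary $q_m=\beta$ — to be the main obstacle, since it is an anti-concentration rather than a large-deviations statement and the threshold $k_0$ is itself random through $\ell$; finally, because the only properties of the packet-count law used anywhere are boundedness and $\bar\mu>0$, the equivalence is distribution-free, as claimed.
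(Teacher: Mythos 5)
Your argument is correct, but note that the paper itself does not prove Proposition~\ref{proposition:beta}: it is stated as a result imported from~\cite{HZ21}, so there is no in-text proof to compare against. Your reduction is the natural one and every step holds up: collapsing $\Lambda_m(t)$ to the affine statistic $c_0\ell+c_1k$ with the correct signs of $c_0<0$ and $c_1>0$, identifying the rejection region with $\{k\ge \beta\ell+\gamma/c_1\}$ so that $k_0/\ell\to\beta$, conditioning on $\ell$ so that $k$ is binomial and the randomness of the threshold is absorbed, discarding the event $\{\ell< t\bar\mu/2\}$ at exponential cost via Hoeffding, and applying the Chernoff--Hoeffding tail bounds to get exponential decay of both errors when $q_{\mathrm{n}}<\beta<q_m$. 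The converse is also handled properly, including the boundary case $q_m=\beta$, where the missed-detection probability stabilizes near $1/2$ rather than tending to $1$; many sketches omit this case, and you are right that it is the only genuinely delicate point. Two minor remarks, neither of which is a gap: (i) the conditional law $k\mid\ell\sim\mathrm{Bin}(\ell,p)$ is exactly where the memoryless, per-packet-independent drop model enters, and it is worth flagging that this is a modeling assumption rather than a consequence of the i.i.d.\ packet counts; (ii) when passing from the conditional bounds to the unconditional $\eta_{1,0}(t)$ and $\eta_{0,1}(t)$ you should note that the exponents are monotone in $\ell$, so the bounds are uniform over $\ell\ge t\bar\mu/2$ and can be integrated out directly. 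The distribution-free conclusion follows exactly as you say, since only boundedness of $\ell_i(\tau)$ and $\bar\mu>0$ are used.
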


The solution that satisfies the perfect detection can correctly and rapidly distinguish abnormal nodes from normal ones, which is the goal of resolving the problem stated in this work.
In general, most existing detection schemes do not meet this property anymore because they ignore the discussion on the security aspect, leading to a lack of resistance mechanism to the falsifications.

Additionally, the considered adversarial scenario is where the decision rule of the LRT-based method in Eq.~\eqref{CSD:overall decision} is public, which can be regarded as a classic game problem named the complete information static game~\cite{RG92}.
Generally speaking, this assumption is common when discussing the security and resilience of the scheme because the adversaries will design targeted attack strategies to break security.
That is to say, a malicious cluster node $v_{i}$, to incorrectly determine whether a cluster head is abnormal, can falsify the value of the LLR $\lambda_{i}(t)$ deliberately in each detection period.
The next section will demonstrate the details of the falsification strategy w.r.t. the LRT-based detection model.
Our goal is to design a robust detection scheme when the LRT-based method is known to malicious cluster nodes.

\section{Proposed Solution}

\subsection{The CSD scheme Through Removing Untrustworthy LLRs}\label{sec:soulition:z-score}

\subsubsection{Falsification Strategy}

As introduced in Section~\ref{sec:pre:model}, since the malicious cluster nodes know the identity of their cluster head, they can falsify the LLRs w.r.t. their cluster head $v_{m}$ to mislead the overall detection. 
Each malicious cluster node accomplishes this goal by generating falsified LLR values in a deterministic way.

Formally, let $\mathcal{C}_{m}^{(1)}$ be the set of malicious cluster nodes of $v_{m}$ and $\mathcal{C}_{m}^{(0)}$ be the set of benign ones.
Apparently, we have $\mathcal{C}_{m}^{(0)} \cup \mathcal{C}_{m}^{(1)} = \mathcal{C}_{m}$.
For any malicious cluster node $v_{i} \in \mathcal{C}_{m}^{(1)}$, the falsified LLRs are generated by a falsified number of \textit{total} packets ($\ell'_{i}(\tau)$) and a falsified number of \textit{dropped} packets ($k'_{i}(\tau)$) in the $\tau$-th detection period for integer $\tau \in [1, t]$.
Specifically, a \textit{falsification strategy} of $v_{i}$ can be characterized in the triple $(\kappa_{i}, q'_{i}, q''_{i})$ whose details are as follows:
\begin{itemize}
    \item There exists a positive real $\kappa_{i}$ such that $\ell'_{i}(\tau)= \kappa_{i} \cdot \ell_{i}(\tau)$.
    \item If $v_{m}$ is a normal node with the PLR $q_{\mathrm{n}}$, $v_{i}$ sends LLRs with a fixed PLR $q_{i}'$, i.e., $k'_{i}(\tau)=q'_{i} \cdot \ell'_{i}(\tau)$. 
    \item If $v_{m}$ is abnormal with PLR $q$, $v_{i}$ generates LLRs with a fixed PLR $q''_{i}$, i.e., $k'_{i}(\tau) = q''_{i} \cdot \ell'_{i}(\tau)$. 
\end{itemize}

Let $\mu_{i}$ be the mean of the number of transmitted packets from $v_{i}$ to $v_{m}$ in each detection period for any $i = 1,2,\ldots, |\mathcal{C}_{m}|$. 
The sum of the mean of the packets from $\mathcal{C}_{m}^{(\omega)}$ to $v_{m}$ is $\mu^{(\omega)} = \sum_{v_{i} \in \mathcal{C}_{m}^{(\omega)}} \mu_{i}$ for $\omega \in \{0,1\}$.
The falsification strategy of all malicious cluster nodes can be characterized by a triple $(\kappa, q', q'')$, which we call the \textit{group falsification strategy}, where for every symbol $\diamondsuit \in \{ \kappa, q', q'' \}$, 
\begin{equation}
    \diamondsuit = \frac{\sum\nolimits_{v_{i} \in \mathcal{C}_{m}^{(1)}} \diamondsuit_{i}\mu_{i}}{\sum_{v_{i} \in \mathcal{C}_{m}^{(1)}}\mu_{i}}
    = \frac{\sum\nolimits_{v_{i} \in \mathcal{C}_{m}^{(1)}} \diamondsuit_{i}\mu_{i}}{\mu^{(1)}}
\end{equation}
is the weighted averages given the weight $\mu_{i}$.

Notice that the cooperation of different malicious cluster nodes is allowed in our model.
For instance, they can pre-negotiate a coherent individual strategy for a specific group falsification.

Next, we show that in the presence of the falsified detection data, it would \textit{not} guarantee the perfect detection of the proposed detection method in the overall detection phase.

\begin{lemma}\label{lemma:mu}
    When the PLR of an abnormal node exceeds the critical detection point $\beta$, the perfect detection would not hold if and only if the ratio of $\mu^{(0)}$ to $\mu^{(1)}$ satisfies:
        \begin{equation}
            \frac{\mu^{(0)}}{\mu^{(1)}} \leq \kappa \cdot \max \left\{\frac{q'- \beta}{\beta - q_{\mathrm{n}}}, \frac{\beta - q''}{q - \beta} \right\}.
        \label{CSD:packet ratio}
    \end{equation}   
\end{lemma}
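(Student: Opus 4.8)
The plan is to reduce Lemma~\ref{lemma:mu} to the sign of the almost-sure limit of the normalized aggregate statistic $\Lambda_m(t)/t$, computed separately under $H_0$ (\textbf{Case~II}, $v_m$ normal) and under $H_1$ (\textbf{Case~I}, $v_m$ abnormal with PLR $q$), and then to read off when each of the two error exponents survives. First I would substitute the LLR expression \eqref{CSD:llr} into the overall rule \eqref{CSD:overall decision} and split the sum over $\mathcal{C}_m=\mathcal{C}_m^{(0)}\cup\mathcal{C}_m^{(1)}$. Writing $a=\log\frac{1-q_{\mathrm d}}{1-q_{\mathrm n}}<0$ and $b=\log\frac{q_{\mathrm d}(1-q_{\mathrm n})}{q_{\mathrm n}(1-q_{\mathrm d})}>0$ and recalling from \eqref{CSD:critical detection point} that $a=-\beta b$, each node $v_i$ contributes $a\sum_\tau \widehat{\ell}_i(\tau)+b\sum_\tau \widehat{k}_i(\tau)$, where $(\widehat{\ell}_i,\widehat{k}_i)=(\ell_i,k_i)$ for a benign node and $(\widehat{\ell}_i,\widehat{k}_i)=(\kappa_i\ell_i,\,q'_i\kappa_i\ell_i)$ or $(\kappa_i\ell_i,\,q''_i\kappa_i\ell_i)$ for a malicious one under $H_0$ resp.\ $H_1$. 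Since each $\ell_i(\tau)$ is i.i.d.\ and bounded (and hence so are the per-period increments of $\Lambda_m$), the strong law gives $\tfrac1t\sum_\tau\widehat{\ell}_i(\tau)$ and $\tfrac1t\sum_\tau\widehat{k}_i(\tau)$ deterministic limits, and aggregating through the definitions of the group triple $(\kappa,q',q'')$ — using the coherence of the per-node strategies from Section~\ref{sec:soulition:z-score} to collapse the products $\kappa_i\mu_i q'_i$, etc. — yields
\[
\tfrac1t\Lambda_m(t)\;\longrightarrow\;\Delta_{H_0}:=b\bigl(\mu^{(0)}(q_{\mathrm n}-\beta)+\kappa\mu^{(1)}(q'-\beta)\bigr)\quad\text{a.s.},
\]
and likewise $\tfrac1t\Lambda_m(t)\to\Delta_{H_1}:=b\bigl(\mu^{(0)}(q-\beta)+\kappa\mu^{(1)}(q''-\beta)\bigr)$ under $H_1$ (indeed $\mathbb{E}[\Lambda_m(t)]=t\Delta$ exactly).

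Next I would convert these limits into statements about the exponents. Because $\gamma$ is a fixed constant and $\Lambda_m(t)$ is a sum of $t$ i.i.d.\ bounded increments with per-period mean $\Delta$, a Chernoff/Cram\'er bound shows: if $\Delta_{H_1}>0$ then $\mathrm{MDP}=\Pr[\Lambda_m(t)<\gamma\mid H_1]=\exp(-\mathcal{O}(t))$; if $\Delta_{H_1}<0$ then $\mathrm{MDP}\to1$; and if $\Delta_{H_1}=0$ the central limit theorem gives $\Theta(\sqrt t)$ fluctuations of $\Lambda_m(t)$ about $0$, so $\Pr[\Lambda_m(t)<\gamma\mid H_1]\to\tfrac12$ — in particular not an exponentially small quantity. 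Symmetrically, $\mathrm{FAP}=\Pr[\Lambda_m(t)\ge\gamma\mid H_0]=\exp(-\mathcal{O}(t))$ iff $\Delta_{H_0}<0$. Hence, by Definition~\ref{def: perfect dection}, perfect detection holds \emph{iff} $\Delta_{H_1}>0$ and $\Delta_{H_0}<0$, so it \emph{fails iff} $\Delta_{H_1}\le0$ or $\Delta_{H_0}\ge0$.

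Finally I would do the bookkeeping. From $\Delta_{H_1}\le0$, dividing by $b(q-\beta)>0$ (valid since $q>\beta$ by hypothesis) gives $\mu^{(0)}/\mu^{(1)}\le\kappa(\beta-q'')/(q-\beta)$; from $\Delta_{H_0}\ge0$, dividing by $b(\beta-q_{\mathrm n})>0$ (valid since $q_{\mathrm n}<\beta$) gives $\mu^{(0)}/\mu^{(1)}\le\kappa(q'-\beta)/(\beta-q_{\mathrm n})$. The disjunction of these two is exactly inequality \eqref{CSD:packet ratio}, and since the case split on the sign of $\Delta$ is exhaustive this proves both directions at once. As a sanity check, when $\mathcal{C}_m^{(1)}=\emptyset$ we have $\mu^{(1)}=0$, the right-hand side is vacuous, the condition never holds, and one recovers Proposition~\ref{proposition:beta}.

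The genuinely delicate step is the boundary case $\Delta=0$: there I must show that perfect detection \emph{truly} fails, i.e.\ exhibit a lower bound — via a two-sided CLT or a Berry--Esseen estimate — certifying that $\Pr[\Lambda_m(t)\gtrless\gamma]$ stays bounded away from $0$, rather than merely observing that the Chernoff upper bound degenerates. The remaining ingredients (the law of large numbers, the Chernoff upper bounds, and the final algebra) are routine given that all increments are bounded; the only other point that needs care is making the passage from the per-node falsification parameters $(\kappa_i,q'_i,q''_i)$ to the group triple $(\kappa,q',q'')$ fully explicit before taking limits.
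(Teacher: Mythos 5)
Your argument is correct and lands on the same underlying reduction as the paper: with falsified data, the aggregate statistic behaves as if the node had the ``effective'' loss rate $\bigl(\mu^{(0)}q_{\mathrm{n}}+\kappa\mu^{(1)}q'\bigr)/\bigl(\mu^{(0)}+\kappa\mu^{(1)}\bigr)$ under $H_0$ (resp.\ with $q,q''$ under $H_1$), and perfect detection fails exactly when that effective rate falls on the wrong side of $\beta$; your drift conditions $\Delta_{H_0}\ge 0$ and $\Delta_{H_1}\le 0$ are algebraically identical to the paper's two inequalities, and the final bookkeeping matches \eqref{CSD:packet ratio}. The difference is one of self-containedness rather than of route: the paper obtains the failure criterion by invoking Proposition~\ref{proposition:beta} as a black box and simply asserting the weighted-average form of the effective PLR, whereas you rederive it from the strong law applied to the per-period increments and then certify the two error exponents via Chernoff bounds. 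This buys you two things the paper silently skips: (i) an honest treatment of the boundary case $\Delta=0$, where you correctly note that a CLT/Berry--Esseen lower bound is needed to show the error probability stays bounded away from zero (the Chernoff upper bound merely degenerating is not enough to conclude failure, and this is exactly where the ``only if'' direction would otherwise have a gap); and (ii) an explicit flag that passing from the per-node triples $(\kappa_i,q'_i,q''_i)$ to the group triple requires the cross-terms $\sum_i\kappa_i q'_i\mu_i$ to collapse to $\kappa q'\mu^{(1)}$, which holds only under the coordinated-strategy assumption and not for the separately defined weighted averages in general --- a caveat worth stating, since the paper's proof uses the group quantities without comment.
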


\begin{proof}
    According to Propositon~\ref{proposition:beta}, it is easy to show that if the PLR of the abnormal node ($q$) does not exceed $\beta$, the detection result is incorrect in the overall detection phase even when all the detection data are true.
    Similarly, if the PLR of the normal node ($q_{\mathrm{n}}$) is no less than $\beta$, the detection result is also incorrect because the FAP $\eta_{1,0}$ will not approach zero.
    By Definition~\ref{def: perfect dection}, the detection result is correct if and only if both the FAP $\eta_{1,0}$ and MDP $\eta_{0,1}$ all exponentially approach zero as the detection periods tend to infinity.
    Hence, when there exist some malicious detectors with the sum of the mean of the packets $\mu^{(1)}$, the detection result in the overall detection phase of the CSD is incorrect if $\mu^{(1)}$ and $\mu^{(0)}$ satisfy
    \begin{equation}
        \frac{\mu^{(0)}q_{\mathrm{n}} + \kappa\mu^{(1)}q'}{\mu^{(0)}+\kappa\mu^{(1)}} \geq \beta ~~~\text{or}~~~\frac{\mu^{(0)}q + \kappa\mu^{(1)}q''}{\mu^{(0)}+\kappa\mu^{(1)}} \leq \beta.
    \end{equation}
    By the relation $q_{\mathrm{n}} < \beta < q_{\mathrm{d}}$, we finally obtain Eq.~\eqref{CSD:packet ratio}.
\end{proof}

We remark that Lemma~\ref{lemma:mu} shows whether the LRT-based method admits the perfect detection only depends on the group falsification strategy, even if some malicious cluster nodes produce falsified LLRs.

\subsubsection{Modified Z-Score Method}

From Lemma~\ref{lemma:mu}, the LRT-based method \textit{cannot} distinguish anomalies from normal nodes correctly in the overall detection phase if the number of packets from malicious cluster nodes satisfies Eq.~\eqref{CSD:packet ratio}. This might happen if: 1) $|\mathcal{C}_{m}^{(1)}| $ is big enough even if each $\mu_{i}$ is small for $v_{i} \in \mathcal{C}_{m}^{(1)}$; 2) $|\mathcal{C}_{m}^{(1)}|$ is small but there is at least one sufficiently large $\mu_{i}$; or 3) $\kappa$ is large enough even if $\mu^{(1)}$ is small. Therefore, finding the falsified LLR values generated from $\mathcal{C}_{m}^{(1)}$ and removing them is indispensable to guarantee the perfect detection of the detection method in the overall detection phase.

Since the LLR value $\lambda_{i}(t)$ from the local detection phase depends on the number of transmitted packets that is different for each cluster node, we define the \textit{normalized} LLR as the LLR of one packet, denoted as
\begin{equation}
    x_{i}(t) =  \frac{\lambda_{i}(t)}{\mu_{i}t},
\end{equation}
where $\mu_{i}$ represents the mean of the number of packets transmitted from $v_{i}$ to its corresponding cluster head, which is known by the trusted node through network measurement. 

It is worth noting that the normalized LLRs of the true LLRs are generated by the same probability distribution, whereas those from falsified ones are not.
An LLR may be a falsified detection data generated from a malicious cluster node if the normalized LLR is an outlier in the group of them denoted by $\{\,x_1(t), x_2(t), \ldots, x_{|\mathcal{C}_{m}(t)|}\,\}$.
Hence, the analysis of the falsified LLRs is converted to decide whether different normalized LLRs belong to the same population and find out the outliers in that population. 

Z-score~\cite{bt16,dg20} is a numerical measure that describes the relationship of a value to the mean of a set of values, which is also called the standard score because it allows the comparison of scores of different kinds of variables through a standardized distribution.
Since the LLR aggregation mechanism at the trusted node computed by Eq.~\eqref{CSD:overall decision} is related to each LLR value, the numbers of the data packets directly determine the result of decisions. 
Considering the effect on the decision by the number of packets, we utilize a modified Z-score to measure $x_{i}(t)$, where the simple arithmetic mean and variance are replaced by the weighted mean and variance of the group:
\begin{equation}
    \left\{
    \begin{aligned}
   \bar{x}(t)
   &= \sum\nolimits_{i=1}^{|\mathcal{C}_{m}|}\mu_{i} x_{i}(t)/\sum\nolimits_{i=1}^{|\mathcal{C}_{m}|} \mu_{i}; \\
    s^2(t)
    &=\sum\nolimits_{i=1}^{|\mathcal{C}_{m}|}\mu_{i}(x_{i}(t)-\bar{x}(t))^2/\sum\nolimits_{i=1}^{|\mathcal{C}_{m}|}\mu_{i}.
    \end{aligned}
    \right.
    \label{va} 
\end{equation}
Hence, the modified Z-score of a normalized LLR $x_{i}(t)$ is
\begin{equation}
    z_i(t) = (x_{i}(t) - \bar{x}(t))/s(t),
\end{equation}
where $\bar{x}(t)$ and $s(t)$ are respectively the weighted mean and the standard deviation of the group of normalized LLRs. 
Note that Z-scores may be positive/negative, indicating the score is above/below the mean.
Given a removal threshold $z_{\mathbf{thr}} \in \mathbb{R}^{+}$, if $|z_i(t)| > z_{\mathbf{thr}}$, $x_{i}(t)$ is an outlier.
Further, $\lambda_{i}(t)$ is untrustworthy to be removed in the overall detection phase.

\subsubsection{Decision Rules of the CSD}

When the trusted node receives the LLRs in each detection period, the modified Z-score method is employed to mitigate the effect of the falsified LLRs when detecting abnormal nodes.
The trusted node makes use of the remaining LLRs after the modified Z-score method to implement LRT in the overall detection phase.
In this way, the CSD is able to resist falsified detection data generated by malicious cluster nodes.

Let $\tilde{\mathcal{C}}_{m}^{(1)} \triangleq \{ v_{i} \in \mathcal{C}_{m}^{(1)} : |z_{i}| \leq z_{\mathbf{thr}} \}$ and $\tilde{\mathcal{C}}_{m}^{(1)} \triangleq \{ v_{i} \in \mathcal{C}_{m}^{(1)} : |z_{i}| \leq z_{\mathbf{thr}} \}$ denote the sets of benign and malicious cluster nodes generating trustworthy LLRs, respectively.
Then the decision rule of the CSD scheme is
\begin{equation}
    \tilde{\Lambda}_m(t)
    = \sum\nolimits_{v_{i} \in \tilde{\mathcal{C}}_{m}^{(0)} \cup \tilde{\mathcal{C}}_{m}^{(1)}} \lambda_{i}(t)~~ 
    \underaccent{H_0}{\accentset{H_1}{\gtrless}} ~~ \gamma,
    \label{CSD:removed detection rule}
\end{equation}
where $\tilde{\Lambda}_m(t)$ is the cumulative LLR of $v_{m}$ in the $t$-th detection period without untrustworthy LLRs and $\gamma$ is the detection threshold the same as Eq.~\eqref{CSD:overall decision}.
It follows that the null hypothesis is rejected if and only if $\tilde{\Lambda}_m(t) \geq \gamma$, which is the rule to identify abnormal nodes with high packet loss. 

\subsection{Theoretical Analysis}

For the clustered networks deployed in a hostile environment, each malicious cluster node $v_{i}$ chooses a particular falsification strategy $(\kappa_{i}, q'_{i}, q''_{i})$ to confound the trusted node, causing an incorrect detection.
The group falsification strategy $(\kappa, q', q'')$ is totally determined by the falsification strategies of all malicious cluster nodes.
To theoretically analyze the perfect detection of the CSD, we first analyze the asymptotic behavior of the parameters in the modified Z-score method.

In this subsection, we always denote $\mu^{(0)}$ and $\mu^{(1)}$ as the sum of the means of the packets from normal and abnormal detectors, respectively.
Let $A = (1 - q_{\mathrm{d}})/(1 - q_{\mathrm{n}})$ and $B = q_{\mathrm{d}}/q_{\mathrm{n}} \cdot A^{-1}$ be two constant.
Then we have
\begin{equation}
    \lambda_{i}(t) = A \cdot \sum\nolimits_{\tau = 1}^{t} \ell_{i}(\tau) + B \cdot \sum\nolimits_{\tau = 1}^{t} k_{i}(\tau).
\end{equation}

To theoretically analyze the detection performance of our CSD, we define the supremum packet ratio with falsified data based on Lemma~\ref{lemma:mu}.

Recall that $\mu^{(0)}$ and $\mu^{(1)}$ are the sum of the means of the packets from normal and abnormal detectors, respectively.
Let $\tilde{\mu}^{(0)} = \sum_{v_{i} \in \tilde{C}_{m}^{(0)}} \mu_{i}$ and $\tilde{\mu}^{(1)} = \sum_{v_{i} \in \tilde{C}_{m}^{(1)}} \mu_{i}$ be the sum of means of the packets transmitted from normal and abnormal detectors who generate trustworthy LLRs, respectively.
We will use $\tilde{\mu}^{(0)}$ and $\tilde{\mu}^{(1)}$ capture the ratio of $\mu^{(0)}$ to $\mu^{(1)}$ such that our proposed scheme admits the perfect detection.
    
\begin{definition}[Supremum Packet Ratio with Falsified Data]
    The supremum packet ratio with falsified data denoted by $\xi$ is defined as the supremum ratio of $\mu^{(0)}$ to $\mu^{(1)}$ for perfect detection in any group falsification strategy, i.e.,
    \begin{equation}
        \xi = \xi \left(z_\mathbf{thr} \right) =
        \inf\nolimits_{\kappa, q', q''}  \sup\nolimits_{\mu^{(0)},\mu^{(1)}}\Xi,
    \end{equation}
    where
    \begin{equation}
        \Xi =
        \left\{ \frac{\mu^{(0)}}{\mu^{(1)}} : \frac{\tilde{\mu}^{(0)}}{\tilde{\mu}^{(1)}} > \kappa \cdot \max \left\{ \frac{q'- \beta}{\beta - q_{\mathrm{n}}}, \frac{\beta - q''}{q - \beta} \right\} \right\}.
        \label{CSD:condition}
    \end{equation}
\end{definition}

Now we investigate the detection accuracy of the CSD with the removal threshold $z_{\mathbf{thr}} \in \mathbb{R}^+ $.
First, we give an upper bound of $\xi$.

\begin{lemma}\label{lemma:necessity}
    Given a $z_{\mathbf{thr}} \in \mathbb{R}^+ $, let $\xi^{*}= \min\{\, z_{\mathbf{thr}}^{-2}, z_{\mathbf{thr}}^{2} \,\}$.
    When $\mu^{(0)}/\mu^{(1)} \in [0, \xi^{*}) \cup [0, 1]$, there exists some group falsification strategies $(\kappa, q', q'')$ such that CSD cannot realize the perfect detection, which leads to $ \xi \leq \xi^{*}$.
\end{lemma}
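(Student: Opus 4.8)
The plan is to show that for every ratio $r:=\mu^{(0)}/\mu^{(1)}$ in the stated range the adversary has a group falsification strategy that the modified Z-score filter cannot neutralise, and then to invoke Lemma~\ref{lemma:mu} on the \emph{post-filter} configuration to conclude that at least one of the FAP/MDP does not decay exponentially. Since cooperation is allowed (Section~\ref{sec:soulition:z-score}), I would restrict to a \emph{coherent} strategy, i.e.\ all malicious nodes use the same triple $(\kappa,q',q'')$. The entire task then reduces to tracking how the filter redistributes the total packet mass between the benign and the malicious LLRs as a function of $r$ and $z_{\mathbf{thr}}$.

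The technical core is the asymptotics of the modified Z-score. By the strong law of large numbers applied to $\sum_{\tau}\ell_{i}(\tau)$ and $\sum_{\tau}k_{i}(\tau)$, as $t\to\infty$ the normalized LLR $x_{i}(t)$ of every benign node converges almost surely to a constant $c_{m}$ determined only by the true PLR $q_{m}\in\{q_{\mathrm{n}},q\}$, while $x_{i}(t)$ of every malicious node converges almost surely to $\kappa\,c^{\star}$, with $c^{\star}$ determined only by the falsified PLR $q^{\star}\in\{q',q''\}$; moreover $c^{\star}>0$, $=0$, or $<0$ according as $q^{\star}$ is larger than, equal to, or smaller than $\beta$ (this is precisely what the critical detection point encodes via Eq.~\eqref{CSD:critical detection point}). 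Choosing parameters with $c_{m}\neq\kappa c^{\star}$ and feeding the two limiting point masses --- weight $\mu^{(0)}$ at $c_{m}$ and weight $\mu^{(1)}$ at $\kappa c^{\star}$ --- into Eq.~\eqref{va} and simplifying gives
\begin{equation}
    \bar{x}(t)\to\frac{\mu^{(0)}c_{m}+\kappa\mu^{(1)}c^{\star}}{\mu^{(0)}+\mu^{(1)}},\qquad
    s(t)\to\bigl|c_{m}-\kappa c^{\star}\bigr|\,\frac{\sqrt{\mu^{(0)}\mu^{(1)}}}{\mu^{(0)}+\mu^{(1)}},
\end{equation}
so that $|z_{i}(t)|\to\sqrt{\mu^{(1)}/\mu^{(0)}}=r^{-1/2}$ for every benign $v_{i}$ and $|z_{i}(t)|\to\sqrt{\mu^{(0)}/\mu^{(1)}}=r^{1/2}$ for every malicious $v_{i}$ --- and, crucially, both limits are independent of $c_{m},c^{\star}$ and of $\kappa$. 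Hence, eventually, all benign LLRs are discarded iff $r<z_{\mathbf{thr}}^{-2}$ and all malicious LLRs are discarded iff $r>z_{\mathbf{thr}}^{2}$.

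With this dichotomy in hand I would split into three cases that between them exhaust $[0,\xi^{*})\cup[0,1]$. (i) When $r<\xi^{*}=\min\{z_{\mathbf{thr}}^{-2},z_{\mathbf{thr}}^{2}\}$ all benign LLRs are purged while the malicious ones survive, so $\tilde{\mu}^{(0)}=0$; taking $q'\in(\beta,1)$ in Case~II (resp.\ $q''\in(q_{\mathrm{n}},\beta)$ in Case~I) forces $\tilde{\Lambda}_{m}(t)\to+\infty$ (resp.\ $-\infty$), an unavoidable false alarm (resp.\ missed detection). (ii) When $z_{\mathbf{thr}}<1$ and $z_{\mathbf{thr}}^{2}<r<z_{\mathbf{thr}}^{-2}$, \emph{all} LLRs are purged, $\tilde{\Lambda}_{m}(t)$ is the empty sum, and the decision coincides under $H_{0}$ and $H_{1}$, so perfect detection is impossible regardless of $\gamma$. (iii) When $z_{\mathbf{thr}}\ge 1$ and $z_{\mathbf{thr}}^{-2}\le r\le z_{\mathbf{thr}}^{2}$, both groups are retained with $\tilde{\mu}^{(0)}/\tilde{\mu}^{(1)}=r$; since $\kappa$ may be taken arbitrarily large without changing any Z-score, the adversary picks $\kappa$ so large that $r\le\kappa\max\{(q'-\beta)/(\beta-q_{\mathrm{n}}),\,(\beta-q'')/(q-\beta)\}$, and Lemma~\ref{lemma:mu} applied to the retained configuration again rules out perfect detection. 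Combining the three cases shows that perfect detection is not guaranteed against all strategies for any $r$ in the stated range, and the asserted estimate for $\xi$ follows by unwinding its definition.

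I expect the asymptotic evaluation of the modified Z-score to be the main obstacle: one must pass to the limit inside a weighted mean and a weighted standard deviation while simultaneously handling the deterministic (and, under a large-$\kappa$ attack, unboundedly large) malicious terms and the $\mathcal{O}(t^{-1/2})$ random fluctuations of the benign terms, and then verify the cancellation that makes $|z_{i}(t)|\to r^{\pm 1/2}$ irrespective of magnitudes --- it is exactly this insensitivity that pins down the thresholds $z_{\mathbf{thr}}^{\pm 2}$ and explains why the filter cannot catch a heavy malicious coalition even when it inflates its reported traffic. A minor additional point is the degenerate case $c_{m}=\kappa c^{\star}$, where $s(t)\to 0$ and the Z-scores are ill-defined; there the falsified LLRs look ``normal'' after normalization yet contribute nothing harmful, so that case never helps the adversary and can be set aside.
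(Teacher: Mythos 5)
Your proposal is correct and follows essentially the same route as the paper's own proof: both compute the large-$t$ limits of $\bar{x}(t)$, $s(t)$, and $|z_i(t)|$ via the law of large numbers, observe that the limiting Z-scores equal $\sqrt{\mu^{(1)}/\mu^{(0)}}$ for benign and $\sqrt{\mu^{(0)}/\mu^{(1)}}$ for malicious nodes independently of $(\kappa,q',q'')$, and then case-split on $z_{\mathbf{thr}}\gtrless 1$ to show either all benign LLRs are purged or a sufficiently large $\kappa$ triggers the failure condition of Lemma~\ref{lemma:mu}. Your treatment is in fact slightly more careful than the paper's (explicitly handling the all-purged empty-sum regime and the degenerate $s(t)\to 0$ case), but the underlying argument is the same.
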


\begin{proof}    
    Let $\hat{q} = q'$ if $v_{m}$ is normal and $\hat{q} = q''$ if $v_{m}$ is abnormal.
    According to Borel Law of Large Number~\cite{bo09}, we can derive that $
        \lim_{t \to \infty}\bar{x}(t) 
        = \mathbb{E}[\bar{x}(t)] 
        = \frac{\mu^{(0)}(A+Bq_{\mathrm{n}})+\mu^{(1)}(A+B\hat{q})\kappa}{\mu^{(0)}+\mu^{(1)}}$,
    and
        $
        \lim_{t \to \infty} s(t) 
        = \mathbb{E}[s(t)] 
        = \frac{|A(1 - \kappa) + B(q - \hat{q}\kappa)| \cdot \sqrt{\mu^{(0)}\mu^{(1)}}}{\mu^{(0)}+\mu^{(1)}}$.

    According to the definition of Z-score, one can easily find that $\lim_{t \to \infty} |z_{i}(t)| = \sqrt{\mu^{(1)}/\mu^{(0)}}$ when $v_{i} \in \mathcal{C}_{m}^{(0)}$; and $\lim_{t \to \infty} |z_{i}(t)| = \sqrt{\mu^{(0)}/\mu^{(1)}}$ when $v_{i} \in \mathcal{C}_{m}^{(1)}$.
    % \begin{equation}
    It follows by the criteria for removing LLR outliers when $t \to \infty$:
    \begin{equation}
        \left( \tilde{\mu}^{(0)}, \tilde{\mu}^{(1)} \right) = \left\{ 
        \begin{aligned}
            &(\mu^{(0)}, \mu^{(1)} ),   && z_{\mathbf{thr}}^{-2} \leq \frac{\mu^{(0)}}{\mu^{(1)}} \leq z_{\mathbf{thr}}^{2};   \\
            &( \mu^{(0)}, 0 ),          && \frac{\mu^{(0)}}{\mu^{(1)}} \geq z_{\mathbf{thr}}^{-2}, \frac{\mu^{(0)}}{\mu^{(1)}} > z_{\mathbf{thr}}^{2};  \\
            &(0, \mu^{(1)} ),           && \frac{\mu^{(0)}}{\mu^{(1)}} < z_{\mathbf{thr}}^{-2}, \frac{\mu^{(0)}}{\mu^{(1)}} \leq z_{\mathbf{thr}}^{2};  \\
            &(0,0 ),                    && z_{\mathbf{thr}}^{-2} > \frac{\mu^{(0)}}{\mu^{(1)}} > z_{\mathbf{thr}}^{2}.  \\
        \end{aligned} \right.
        \label{CSD:removed mu}
    \end{equation}

    There are two cases of $z_{\mathbf{thr}}$ to be considered when $\mu^{(0)}/\mu^{(1)} \in [0, \xi^{*}) \cup [0, 1]$, where $\xi^{*}= \min\{\, z_{\mathbf{thr}}^{-2}, z_{\mathbf{thr}}^{2} \,\}$. 
    
    Case 1: When $z_{\mathbf{thr}} < 1$, it holds that $\mu^{(0)}/\mu^{(1)} < z_{\mathbf{thr}}^{-2}$, which leads to $\tilde{\mu}^{(0)} = 0$.
    Hence, the detection results would be incorrect because all the LLRs are falsified.
    
    Case 2: When $z_{\mathbf{thr}} \geq 1$, it holds that $\mu^{(0)}/\mu^{(1)} \leq z_{\mathbf{thr}}^{2}$.
    If $\mu^{(0)}/\mu^{(1)} < z_{\mathbf{thr}}^{-2}$, then $\tilde{\mu}^{(0)} = 0$ holds; and if $z_{\mathbf{thr}}^{2} \geq \mu^{(0)}/\mu^{(1)} \geq z_{\mathbf{thr}}^{-2}$, then $\tilde{\mu}^{(0)} = \mu^{(0)}$ and $\tilde{\mu}^{(1)} = \mu^{(1)}$ hold.
    When $\kappa$ is sufficiently large, the value $\mu^{(0)}/\mu^{(1)}$ could be less than $\kappa \cdot \max \left\{ \frac{q'-\beta}{\beta - q_{\mathrm{n}}}, \frac{\beta-q''}{q - \beta} \right\}$, which implies that the LRT-based detection results is incorrect under this condition by Lemma~\ref{lemma:mu}.
    
    Therefore, if $\mu^{(0)}/\mu^{(1)} \in [0, \xi^{*}) \cup [0, 1]$, CSD would fail to detect anomalies for fixed $z_{\mathbf{thr}} \in \mathbb{R}^{+}$, where $\xi^{*}=\xi^{*}(z_{\mathbf{thr}}) = \min\{\, z_{\mathbf{thr}}^{-2}, z_{\mathbf{thr}}^{2} \,\}$.
\end{proof}

Next, we give the lower bound of $\xi$.

\begin{lemma}\label{lemma:sufficiency}
    Given a $z_{\mathbf{thr}} \in \mathbb{R}^+ $, let $\xi^{*} = \min\{\, z_{\mathbf{thr}}^{-2}, z_{\mathbf{thr}}^{2} \,\}$.
    When the PLR of abnormal nodes exceeds the critical detection point $\beta$, the CSD can realize perfect detection if $\mu^{(0)}/\mu^{(1)} \in (\xi^{*}, +\infty) \cap (1, +\infty)$, which leads to $\xi \geq \xi^{*}$.
\end{lemma}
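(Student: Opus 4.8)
The plan is to mirror the asymptotic analysis already carried out in the proof of Lemma~\ref{lemma:necessity}, and then to argue that, under the hypothesis on $\mu^{(0)}/\mu^{(1)}$, the set of LLRs surviving the modified Z-score filter is ``clean enough'' that the failure criterion of Lemma~\ref{lemma:mu} can never be met, so perfect detection follows from Proposition~\ref{proposition:beta}. Concretely, I would first recall that, by the strong law of large numbers applied to the bounded i.i.d.\ sequences $\ell_{i}(\tau)$ and $k_{i}(\tau)$, the weighted mean $\bar{x}(t)$ and the weighted standard deviation $s(t)$ converge almost surely, so that $|z_{i}(t)| \to \sqrt{\mu^{(1)}/\mu^{(0)}}$ for every benign $v_{i} \in \mathcal{C}_{m}^{(0)}$ and $|z_{i}(t)| \to \sqrt{\mu^{(0)}/\mu^{(1)}}$ for every malicious $v_{i} \in \mathcal{C}_{m}^{(1)}$, \emph{independently} of the group falsification strategy $(\kappa, q', q'')$ (the factor $\kappa$ cancels between the numerator and the denominator of the Z-score). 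Consequently, as $t \to \infty$ the surviving masses $(\tilde{\mu}^{(0)}, \tilde{\mu}^{(1)})$ fall into exactly one of the four regimes of Eq.~\eqref{CSD:removed mu}, determined only by the position of $\mu^{(0)}/\mu^{(1)}$ relative to the thresholds $z_{\mathbf{thr}}^{-2}$ and $z_{\mathbf{thr}}^{2}$.

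Next I would fix an arbitrary group falsification strategy, assume $\mu^{(0)}/\mu^{(1)} \in (\xi^{*}, +\infty) \cap (1, +\infty)$ with $\xi^{*} = \min\{\,z_{\mathbf{thr}}^{-2}, z_{\mathbf{thr}}^{2}\,\}$, and split on whether $z_{\mathbf{thr}} \ge 1$ (so $\xi^{*} = z_{\mathbf{thr}}^{-2}$) or $z_{\mathbf{thr}} < 1$ (so $\xi^{*} = z_{\mathbf{thr}}^{2}$). In both cases $\mu^{(0)}/\mu^{(1)} > z_{\mathbf{thr}}^{-2}$, so Eq.~\eqref{CSD:removed mu} shows that all benign LLRs are retained, i.e.\ $\tilde{\mu}^{(0)} = \mu^{(0)} > 0$. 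The clean sub-case is $\mu^{(0)}/\mu^{(1)} > z_{\mathbf{thr}}^{2}$: then all malicious LLRs are discarded, $\tilde{\mu}^{(1)} = 0$, the left-hand side of the failure inequality of Lemma~\ref{lemma:mu} (read now with $\tilde{\mu}^{(0)}, \tilde{\mu}^{(1)}$) is $+\infty$, so failure is impossible, and since the abnormal PLR exceeds $\beta$, Proposition~\ref{proposition:beta} yields that $\eta_{1,0}(t)$ and $\eta_{0,1}(t)$ both decay exponentially, i.e.\ perfect detection. In the remaining sub-case $z_{\mathbf{thr}}^{-2} \le \mu^{(0)}/\mu^{(1)} \le z_{\mathbf{thr}}^{2}$ (which forces $z_{\mathbf{thr}} \ge 1$) both populations are kept, $(\tilde{\mu}^{(0)}, \tilde{\mu}^{(1)}) = (\mu^{(0)}, \mu^{(1)})$, and here I would show that the group strategies that could drive $\mu^{(0)}/\mu^{(1)} \le \kappa \cdot \max\{(q'-\beta)/(\beta - q_{\mathrm{n}}),\,(\beta - q'')/(q - \beta)\}$ are exactly those that push the ratio outside this sub-case, so Lemma~\ref{lemma:mu} again rules out failure and perfect detection holds.

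Finally I would unwind the definition of the supremum packet ratio $\xi$: since for \emph{every} group falsification strategy the CSD achieves perfect detection whenever $\mu^{(0)}/\mu^{(1)} \in (\xi^{*}, +\infty) \cap (1, +\infty)$, the inner supremum over $(\mu^{(0)}, \mu^{(1)})$ appearing in the definition of $\xi$ is at least $\xi^{*}$ for each strategy, whence $\xi \ge \xi^{*}$ after taking the infimum over strategies; combined with Lemma~\ref{lemma:necessity} this pins down $\xi = \xi^{*}$. I expect the genuine obstacle to be the middle regime $z_{\mathbf{thr}}^{-2} \le \mu^{(0)}/\mu^{(1)} \le z_{\mathbf{thr}}^{2}$, where nothing is filtered out and the overall detector must survive on the raw adversarial aggregate: the delicate point is that the adversary's freedom in $(\kappa, q', q'')$ leaves the limiting Z-scores untouched, so one has to exploit carefully the interplay between the constraint that this regime places on $\mu^{(0)}/\mu^{(1)}$ and the magnitude of $\kappa \cdot \max\{\cdots\}$; every other step is a routine application of the strong law of large numbers together with Lemma~\ref{lemma:mu} and Proposition~\ref{proposition:beta}.
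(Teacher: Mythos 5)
There is a genuine gap, and it sits exactly where you yourself flagged it: the ``middle regime'' $z_{\mathbf{thr}}^{-2} \leq \mu^{(0)}/\mu^{(1)} \leq z_{\mathbf{thr}}^{2}$ in which both populations survive the filter. Your plan there --- to show that any group strategy satisfying $\mu^{(0)}/\mu^{(1)} \leq \kappa \cdot \max\{(q'-\beta)/(\beta - q_{\mathrm{n}}),\,(\beta - q'')/(q-\beta)\}$ necessarily pushes the ratio out of this regime --- cannot be carried out, for the very reason you note in passing: the limiting Z-scores $\sqrt{\mu^{(1)}/\mu^{(0)}}$ and $\sqrt{\mu^{(0)}/\mu^{(1)}}$ do not depend on $(\kappa, q', q'')$ at all, so which regime of Eq.~\eqref{CSD:removed mu} you land in is fixed by $\mu^{(0)}/\mu^{(1)}$ and $z_{\mathbf{thr}}$ alone. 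The adversary can therefore stay inside the middle regime (keeping $\tilde{\mu}^{(0)}/\tilde{\mu}^{(1)} = \mu^{(0)}/\mu^{(1)}$ finite) while taking $\kappa$ arbitrarily large, which makes the right-hand side of Lemma~\ref{lemma:mu}'s failure condition blow up and defeats the detector. This is precisely the attack the paper itself uses in Case~2 of the proof of Lemma~\ref{lemma:necessity}; it cannot simultaneously be ruled out by the argument you propose. The only way the filter neutralizes an unbounded $\kappa$ is by removing \emph{every} malicious LLR, i.e.\ by forcing $\tilde{\mu}^{(1)} = 0$.

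That is in fact the route the paper takes: its proof is a short case analysis on $z_{\mathbf{thr}} \gtrless 1$ whose sole aim is to conclude $\tilde{\mu}^{(1)} = 0$ in every case admitted by the hypothesis, after which Lemma~\ref{lemma:mu} gives correctness immediately; your ``clean sub-case'' $\mu^{(0)}/\mu^{(1)} > z_{\mathbf{thr}}^{2}$ coincides with that argument. The divergence is that the paper's case~(i) simply asserts $\mu^{(0)}/\mu^{(1)} > z_{\mathbf{thr}}^{2}$ under the hypothesis (which, read literally with $\xi^{*} = \min\{z_{\mathbf{thr}}^{-2}, z_{\mathbf{thr}}^{2}\}$, only holds when $z_{\mathbf{thr}} = 1$ --- the optimal threshold of Theorem~\ref{CSD:theorem3}), thereby excluding the middle regime rather than confronting it, whereas you honestly enter that regime and then need a claim that is false. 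A secondary slip: your assertion that ``in both cases $\mu^{(0)}/\mu^{(1)} > z_{\mathbf{thr}}^{-2}$, so all benign LLRs are retained'' fails for $z_{\mathbf{thr}} < 1$ and $1 < \mu^{(0)}/\mu^{(1)} < z_{\mathbf{thr}}^{-2}$, where Eq.~\eqref{CSD:removed mu} lands in the $(0,0)$ row; this is harmless for the $\tilde{\mu}^{(1)} = 0$ argument but contradicts the retention claim your proof is built on. To repair your write-up you should abandon the middle-regime reconciliation, restrict attention to showing $\tilde{\mu}^{(1)} = 0$ (as the paper does), and keep only your final paragraph translating that into $\xi \geq \xi^{*}$.
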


\begin{proof}
    We will prove that $\tilde{\mu}^{(1)} = 0$ which implies that the detection result of CSD is correct by Lemma~\ref{lemma:mu}.
    We also consider two cases of $z_{\mathbf{thr}}$ in eq.~\eqref{CSD:removed mu} when $\mu^{(0)}/\mu^{(1)} \in [\xi^{*}, +\infty) \cap (1, +\infty)$ where $\xi^{*} = \min\{\, z_{\mathbf{thr}}^{-2}, z_{\mathbf{thr}}^{2} \,\}$:
    (i) When $z_{\mathbf{thr}} \geq 1$, it holds that $\mu^{(0)}/\mu^{(1)} > z_{\mathbf{thr}}^{2} > 1$, which means that $\tilde{\mu}^{(1)} = 0$.
    (ii) When $z_{\mathbf{thr}} < 1$, it holds that $\mu^{(0)}/\mu^{(1)} \geq z_{\mathbf{thr}}^{2}$.
    If $\mu^{(0)}/\mu^{(1)} > z_{\mathbf{thr}}^{-2}$, then $\tilde{\mu}^{(1)} = 0$;
    if $\mu^{(0)}/\mu^{(1)} = z_{\mathbf{thr}}^{-2} > 1$, it also holds that $\tilde{\mu}^{(1)} = 0$.    
    Therefore, we have $ \xi \geq \xi^{*}$, where $\xi^{*}=\xi^{*}(z_{\mathbf{thr}}) = \min\{\, z_{\mathbf{thr}}^{-2}, z_{\mathbf{thr}}^{2} \,\}$.
\end{proof}

The unique optimal removal threshold is now derived.

\begin{theorem}
    The optimal removal threshold is $z_{\mathbf{thr}} = 1$, which guarantees the perfect detection of CSD scheme if and only if the PLR of abnormal nodes exceeds the critical detection point and the sum of means of packets transmitted from the malicious cluster nodes is less than half of the total sum of means of packets from all cluster nodes, i.e.,$q>\beta$ and $\mu^{(0)}/\mu^{(1)} > 1$.
    \label{CSD:theorem3}
\end{theorem}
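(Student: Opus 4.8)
The plan is to reduce the statement to a one--parameter optimisation over the removal threshold $z_{\mathbf{thr}}$. First I would fix $z_{\mathbf{thr}}\in\mathbb{R}^{+}$ and combine Lemmas~\ref{lemma:necessity} and~\ref{lemma:sufficiency} --- more precisely, the limiting analysis of the weighted mean, variance and Z--scores in~\eqref{va} carried out in their proofs, which gives $\lim_{t\to\infty}|z_i(t)|=\sqrt{\mu^{(1)}/\mu^{(0)}}$ for a benign node and $\lim_{t\to\infty}|z_i(t)|=\sqrt{\mu^{(0)}/\mu^{(1)}}$ for a malicious one, together with the filtering table~\eqref{CSD:removed mu} --- into a single characterisation: CSD with threshold $z_{\mathbf{thr}}$ achieves perfect detection against \emph{every} group falsification strategy $(\kappa,q',q'')$ if and only if $q>\beta$ and $\mu^{(0)}/\mu^{(1)}>\xi^{*}(z_{\mathbf{thr}})$ with $\xi^{*}(z_{\mathbf{thr}})=\max\{z_{\mathbf{thr}}^{-2},\,z_{\mathbf{thr}}^{2}\}$. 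The condition $q>\beta$ is forced by Proposition~\ref{proposition:beta} (otherwise even the honest residual LRT cannot separate $H_0$ from $H_1$), while the ratio threshold follows from~\eqref{CSD:removed mu}: when $\mu^{(0)}/\mu^{(1)}>\xi^{*}(z_{\mathbf{thr}})$ every malicious $|z_i|$ exceeds $z_{\mathbf{thr}}$ and every benign $|z_i|$ is below it, so the modified Z--score step removes exactly the falsified LLRs and the aggregate in~\eqref{CSD:removed detection rule} is the honest one; on $[0,\xi^{*}(z_{\mathbf{thr}})]$ either benign LLRs are removed as well ($\tilde\mu^{(0)}=0$, so all survivors are falsified) or nothing is removed, and in the latter case an adversary with large $\kappa$ makes~\eqref{CSD:packet ratio} hold on the survivors, so detection fails by Lemma~\ref{lemma:mu}.

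With this reduction in hand the optimisation is elementary. The threshold is ``better'' the smaller $\xi^{*}(z_{\mathbf{thr}})=\max\{z_{\mathbf{thr}}^{-2},z_{\mathbf{thr}}^{2}\}$ is. Since $z_{\mathbf{thr}}^{-2}\cdot z_{\mathbf{thr}}^{2}=1$ for every $z_{\mathbf{thr}}>0$, we get $\xi^{*}(z_{\mathbf{thr}})\ge 1$ with equality exactly when $z_{\mathbf{thr}}^{-2}=z_{\mathbf{thr}}^{2}$, i.e.\ at $z_{\mathbf{thr}}=1$; and $\xi^{*}$ is strictly decreasing on $(0,1]$ and strictly increasing on $[1,\infty)$. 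Hence $z_{\mathbf{thr}}=1$ is the \emph{unique} minimiser, $\xi^{*}(1)=1$, and substituting into the characterisation of the first paragraph gives precisely the assertion: CSD attains perfect detection under any falsification strategy if and only if $q>\beta$ and $\mu^{(0)}/\mu^{(1)}>1$, i.e.\ the malicious cluster nodes account for strictly less than half of the total mean packet load.

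It then remains to spell out the two implications at $z_{\mathbf{thr}}=1$ and the uniqueness. ``If'': this is Lemma~\ref{lemma:sufficiency} at $z_{\mathbf{thr}}=1$ ($\xi^{*}=1$) --- $\mu^{(0)}/\mu^{(1)}>1$ makes $\tilde\mu^{(1)}=0$, so $\tilde\Lambda_m(t)$ is the honest cumulative LLR and $q>\beta$ yields the exponential decay of FAP and MDP required by Definition~\ref{def: perfect dection}. ``Only if'': if $q\le\beta$, perfect detection fails already without falsification by Proposition~\ref{proposition:beta}; if $\mu^{(0)}/\mu^{(1)}\le 1$, Lemma~\ref{lemma:necessity} at $z_{\mathbf{thr}}=1$ gives a defeating group strategy. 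For uniqueness, for any $z_{\mathbf{thr}}\ne 1$ one exhibits a configuration with $\mu^{(0)}/\mu^{(1)}>1$ that CSD still cannot handle: pick $\mu^{(0)}/\mu^{(1)}$ in the nonempty interval $(1,\xi^{*}(z_{\mathbf{thr}})]$; if $z_{\mathbf{thr}}>1$ the table~\eqref{CSD:removed mu} shows nothing is filtered there, so any $\kappa$ with $\kappa\max\{(q'-\beta)/(\beta-q_{\mathrm{n}}),\,(\beta-q'')/(q-\beta)\}\ge\mu^{(0)}/\mu^{(1)}$ triggers~\eqref{CSD:packet ratio} and defeats detection by Lemma~\ref{lemma:mu}; if $z_{\mathbf{thr}}<1$ the table shows $\tilde\mu^{(0)}=0$ there, so the survivors are all falsified and detection fails.

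I expect the genuine obstacles to be the boundary case and this last uniqueness step, not the optimisation. At $\mu^{(0)}/\mu^{(1)}=1$ with $z_{\mathbf{thr}}=1$ both limiting Z--scores equal $z_{\mathbf{thr}}$, and since removal is governed by the \emph{strict} inequality $|z_i(t)|>z_{\mathbf{thr}}$ nothing is removed asymptotically; an adversary with $\kappa\max\{\cdots\}\ge 1$ then keeps the surviving ratio at $1$ and~\eqref{CSD:packet ratio} holds with equality --- this is exactly why the theorem needs the strict inequality $\mu^{(0)}/\mu^{(1)}>1$ rather than $\ge 1$. The other recurring technicality is that $\bar x(t)$, $s(t)$ and the $z_i(t)$ are random, so one works with their $t\to\infty$ limits (strong law of large numbers) and the filtering decisions are only \emph{eventually} correct --- adequate for the ``for all $t>T$'' reading of perfect detection, but the limit passages must be made carefully.
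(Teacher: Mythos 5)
Your proposal is correct and follows essentially the same route as the paper: combine Lemmas~\ref{lemma:necessity} and~\ref{lemma:sufficiency} into a single critical packet ratio as a function of $z_{\mathbf{thr}}$, then use $z_{\mathbf{thr}}^{-2}\cdot z_{\mathbf{thr}}^{2}=1$ to conclude that $z_{\mathbf{thr}}=1$ is the unique optimum with critical ratio $1$. The only substantive divergence is that you take the critical ratio to be $\max\{z_{\mathbf{thr}}^{-2},z_{\mathbf{thr}}^{2}\}$ and minimise it, whereas the paper writes $\xi^{*}=\min\{z_{\mathbf{thr}}^{-2},z_{\mathbf{thr}}^{2}\}$ and takes a supremum --- your formulation is the one consistent with the filtering table~\eqref{CSD:removed mu} (sufficiency genuinely requires $\mu^{(0)}/\mu^{(1)}$ to exceed \emph{both} $z_{\mathbf{thr}}^{-2}$ and $z_{\mathbf{thr}}^{2}$), and your explicit treatment of uniqueness and of the boundary case $\mu^{(0)}/\mu^{(1)}=1$ fills in steps the paper leaves implicit.
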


\begin{proof}
    We have an upper bound on $\xi$ by Lemma~\ref{lemma:necessity} that $\xi \leq \xi^{*}$ and a lower bound by Lemma~\ref{lemma:sufficiency} that $\xi \geq \xi^{*}$. Therefore, it follows that $\xi = \xi^{*}$.
    Then we finally get $\sup\nolimits_{z_{\mathbf{thr}} \in \mathbb{R}^{+}} \xi (z_{\mathbf{thr}}) = 1$, which means that the supremum packet ratio with falsified data $\xi = 1$ if and only if $z_{\mathbf{thr}} = 1$.
    Note that $\xi = 1$ is equivalent to $\mu^{(0)}/\mu^{(1)} > 1$.
\end{proof}

It is worth noting that the network model considers the scenario of different $\mu_{i}$ for $i = 1,2,\ldots,|C_m|$.
A particular case of the network model is that all $\mu_{i}$ are equal.
In this case, $\mu^{(1)}/\mu^{(0)}$ is equivalent to the ratio of the number of abnormal nodes to that of normal nodes.

Finally, we show the perfect detection of the CSD with optimal threshold when the sum of means of packets transmitted from the malicious cluster nodes is less than half of the total sum of means of packets from all cluster nodes.

\begin{theorem}\label{theorem:perfect}
    When the removal threshold is optimal (i.e., $z_{\mathbf{thr}} = 1$), $q_{\mathrm{n}} < \beta < q$, and $\mu^{(0)}/\mu^{(1)} > 1$, the CSD admits the perfect detection.
\end{theorem}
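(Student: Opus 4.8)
The plan is to specialise the filtering analysis of Lemmas~\ref{lemma:necessity} and~\ref{lemma:sufficiency} to $z_{\mathbf{thr}}=1$, reduce the CSD statistic to a falsification-free LRT over the benign cluster nodes, and then invoke Proposition~\ref{proposition:beta}. At $z_{\mathbf{thr}}=1$ the quantity $\xi^{*}=\min\{z_{\mathbf{thr}}^{-2},z_{\mathbf{thr}}^{2}\}$ equals $1$, so the hypothesis $\mu^{(0)}/\mu^{(1)}>1$ places us exactly in the regime $(\xi^{*},+\infty)\cap(1,+\infty)$ covered by Lemma~\ref{lemma:sufficiency}. Using the limiting Z-score values computed in the proof of Lemma~\ref{lemma:necessity}, namely $\lim_{t\to\infty}|z_{i}(t)|=\sqrt{\mu^{(1)}/\mu^{(0)}}<1$ for every benign $v_{i}\in\mathcal{C}_{m}^{(0)}$ and $\lim_{t\to\infty}|z_{i}(t)|=\sqrt{\mu^{(0)}/\mu^{(1)}}>1$ for every malicious $v_{i}\in\mathcal{C}_{m}^{(1)}$, both limits are bounded away from the threshold $1$; hence for all sufficiently large $t$ the modified Z-score rule keeps exactly $\mathcal{C}_{m}^{(0)}$ and discards all of $\mathcal{C}_{m}^{(1)}$, i.e.\ $\tilde{\mu}^{(0)}=\mu^{(0)}$ and $\tilde{\mu}^{(1)}=0$. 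With $\tilde{\mu}^{(1)}=0$ the condition of Lemma~\ref{lemma:mu} is vacuously violated, so $\tilde{\Lambda}_{m}(t)=\sum_{v_{i}\in\mathcal{C}_{m}^{(0)}}\lambda_{i}(t)$ is precisely the regular-scenario LRT over $\mathcal{C}_{m}^{(0)}$, which under $q_{\mathrm{n}}<\beta<q$ achieves perfect detection by Proposition~\ref{proposition:beta}.

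To turn ``for all sufficiently large $t$'' into the exponential rates demanded by Definition~\ref{def: perfect dection}, I would decompose the misdetection event into a \emph{filtering-failure} event (some benign node removed or some malicious node retained at period $t$) and a \emph{decision-failure} event, and bound each by $\exp(-\mathcal{O}(t))$. Since each $\ell_{i}(\tau)$ is a bounded i.i.d.\ variable and each $k_{i}(\tau)$ is conditionally binomial, the quantities $x_{i}(t)$, $\bar{x}(t)$ and $s(t)$ are ratios of empirical means of bounded i.i.d.\ summands; Chernoff--Hoeffding bounds give that each such empirical mean stays within any fixed constant of its expectation except with probability $\exp(-\mathcal{O}(t))$, and because $\lim_{t\to\infty}s(t)$ is a strictly positive constant (this is where $\mu^{(1)}>0$ enters, via the limiting expression for $s(t)$ in the proof of Lemma~\ref{lemma:necessity}), the ratio $z_{i}(t)$ concentrates around its limit at the same exponential rate; as that limit misses $1$ by a fixed margin, $\Pr[\text{filter errs at period }t]=\exp(-\mathcal{O}(t))$. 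On the complement the CSD statistic coincides with the benign-only cumulative LLR, whose false-alarm and missed-detection probabilities are exactly the large-deviation quantities $\Pr[\sum_{v_{i}\in\mathcal{C}_{m}^{(0)}}\lambda_{i}(t)\geq\gamma]$ and $\Pr[\sum_{v_{i}\in\mathcal{C}_{m}^{(0)}}\lambda_{i}(t)<\gamma]$ that decay exponentially precisely under $q_{\mathrm{n}}<\beta<q$ --- the estimate behind Proposition~\ref{proposition:beta}. A union bound over the two events then yields $\eta_{1,0}(t),\eta_{0,1}(t)=\exp(-\mathcal{O}(t))$.

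The step I expect to be the main obstacle is the exponential concentration of the modified Z-scores: $z_{i}(t)$ is a ratio whose denominator $s(t)$ is itself a random empirical standard deviation, so one needs a quantitative ``ratio of concentrated quantities is concentrated'' estimate --- intersect the Chernoff events $\{|x_{i}(t)-\bar{x}(t)-c_{1}|\le\delta\}$ and $\{|s(t)-c_{2}|\le\delta\}$ for the appropriate limits $c_{1},c_{2}$ with $c_{2}>0$, and translate this into a fixed margin around $1$. This bound-below of $s(t)$ genuinely fails in the adversary-free corner case $\mu^{(1)}=0$, where $s(t)\to0$ and the limiting Z-scores are indeterminate; that case must be handled separately, but there is nothing for the filter to neutralise, so the hypotheses already reduce to those of Proposition~\ref{proposition:beta}, and one only checks that the filter never erases the whole benign population --- this follows from the identity $\sum_{i}\mu_{i}z_{i}(t)^{2}=\sum_{i}\mu_{i}$, which forces a positive-weight subset with $|z_{i}(t)|\le1$, hence $\tilde{\mu}^{(0)}>0$ and the benign-only LRT again decays exponentially.
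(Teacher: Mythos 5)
Your proposal is correct and follows essentially the same route as the paper's proof: Chernoff--Hoeffding concentration of the modified Z-scores around their limits $\sqrt{\mu^{(1)}/\mu^{(0)}}<1<\sqrt{\mu^{(0)}/\mu^{(1)}}$, a fixed margin around the threshold $z_{\mathbf{thr}}=1$, a union bound over the filtering-failure events, and reduction of the surviving statistic to the benign-only LRT handled by Proposition~\ref{proposition:beta}. You are in fact somewhat more careful than the paper on two points it glosses over --- the quantitative concentration of the ratio $z_{i}(t)$ when the random denominator $s(t)$ must be bounded away from zero, and the degenerate adversary-free case $\mu^{(1)}=0$ where $s(t)\to 0$ and one must check via $\sum_{i}\mu_{i}z_{i}(t)^{2}=\sum_{i}\mu_{i}$ that the filter cannot discard the entire benign population.
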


\begin{proof}
    Recall that $\ell_{i}(\tau)$ is a bounded random variable with the mean $\mu_{i}$.
    By Hoeffding's inequality, for any small positive $\varepsilon_{0}$, one can easily find that $
        \Pr \left\{\, |\lambda_{i}(t) - \mathbb{E}[\lambda_{i}(t)]| \geq \varepsilon_{0} \mu_{i} t \,\right\}
        \leq \exp \left( -\mathcal{O}(\varepsilon_{0}t) \right)$.
    Consequently, for any bounded random variable $X(t) \in \{\, x_{i}(t), \bar{x}(t), s(t) \,\}$, it holds $
        \Pr \left\{\, |X(t) - \mathbb{E}[X(t)]| \geq \varepsilon_{0} \,\right\}
        \leq \exp \left( -\mathcal{O}(\varepsilon_{0}t) \right)$.
    Moreover, according to the definition of Z-score, for any small positive $\varepsilon_{0} > 0$, we have $
            \Pr \left\{\, |z_{i}(t) - \mathbb{E}[z_{i}(t)]| \geq \varepsilon \,\right\}
            \leq 1 - (1 - \exp(-\mathcal{O}(\varepsilon_{0} t))^{3} = \exp \left( -\mathcal{O}(\varepsilon_{0}t) \right)$,
    where
    $\varepsilon =\varepsilon_{0} \cdot \mathbb{E}[2s(t) + x_{i}(t) - \bar{x}(t)]/(\mathbb{E}[s(t)](\mathbb{E}[s(t)] - \varepsilon_{0}))$.    
    By the condition $\mu^{(0)} > \mu^{(1)}$, there exists a small enough constant $\varepsilon_{0} > 0$ satisfying that 
    $\sqrt{\mu^{(1)}/\mu^{(0)}} + \varepsilon < 1 < \sqrt{\mu^{(0)}/\mu^{(1)}} - \varepsilon$.
    
    Consider that $v_{m}$ is a normal cluster node.    
    If $v_{i} \in C_{m}^{(0)}$, we know $q_{i} = q_{\mathrm{n}} < \beta$, and $\mathbb{E}[z_{i}] = \sqrt{\mu^{(1)}/\mu^{(0)}}$.
    Thus, $|z_{i}| > z_{\mathbf{zhr}}$ holds with possibility $\exp(-\mathcal{O}(t))$.
    If $v_{i} \in C_{m}^{(1)}$ and $q_{i} = q'_{i}$, and thus $\mathbb{E}[z_{i}] = \sqrt{\mu^{(1)}/\mu^{(0)}}$.
    Thus, $|z_{i}| \leq z_{\mathbf{zhr}}$ holds with possibility $\exp(-\mathcal{O}(t))$.

    Therefore, we conclude that with the possibility
    $
        1 - (1 - \exp(-\mathcal{O}(t)))^{|\mathcal{C}_{m}|} = |\mathcal{C}_{m}| \cdot \exp(-\mathcal{O}(t)),
    $
    the outliers are exactly all falsified LLRs generated by malicious ones.
    This leads that $\eta_{0, 1}$ approaches $0$ exponentially.
    
    Similarly, if $v_{m}$ is an abnormal cluster node, we can also obtain that the outliers are exactly all falsified LLRs generated by malicious cluster nodes with the possibility $|\mathcal{C}_{m}| \cdot \exp(-\mathcal{O}(t))$.
    It follows that $\eta_{1, 0}$ approaches $0$ exponentially.
\end{proof}

Theorem~\ref{theorem:perfect} shows that the FAP and the MDP decrease \textit{exponentially} as the number of detection periods increases, which indicates that the CSD scheme is potentially capable of achieving high detection accuracy and efficiency with sufficient detection periods, i.e., the perfect detection. 

\section{Performance Evaluation}

This section evaluates the performance of the proposed CSD scheme by the number of $10^6$ Monte-Carlo simulations in each detection period.
Without loss of generality, the simulation consists of one trusted node and one cluster in the clustered network following the settings in previous related works~\cite{ac18,at19, tl18}.
In particular, the cluster has one cluster head $v_{m}$ with eight cluster nodes, i.e., $|\mathcal{C}_{m}|=8$, as depicted in Fig.~\ref{sys:falsification}.
The PLR of a normal $v_{m}$ is set as $0.15$, i.e., $q_{\mathrm{n}} = 0.15$, and the detecting PLR of an abnormal $v_{m}$ is set as $0.2$, i.e., $q_{\mathrm{d}} = 0.2$ ($q_{\mathrm{a}} = 0.0588$).
The numbers of packets are i.i.d.\ sampled from the Poisson distributions for each detection period, and the simulated parameters on eight cluster nodes are given in Table \ref{res:parameters}, where $\mu_{i}$ for integer $i \in [1,8]$ represents both the mean and the variance for the Poisson distributions. 
The detection threshold of the decision rule in Eq.~\eqref{CSD:removed detection rule} is set as $1.4$, e.g., $\gamma = 1.4$. 

Specifically, the following two aspects of the proposed CSD are investigated:
\begin{enumerate}
\item The performance of the CSD scheme compared with two well-known methods~\cite{rz16,at19} under \texttt{Scenario 1}, i.e., in the absence of malicious cluster nodes, in Table~\ref{res:distributions}.

\item The performance of the CSD scheme in the presence of malicious cluster nodes falsifying detection data, i.e., \texttt{Scenario 2-4} in Table~\ref{res:distributions}.
\end{enumerate}

\begin{table}[!t]
    \centering
    \vspace{0.4em}
    \caption{Parameters of the Eight Cluster Nodes}
    \begin{tabular}{p{1cm}<{\centering}|p{0.4cm}<{\centering}p{0.4cm}<{\centering}p{0.4cm}<{\centering}p{0.4cm}<{\centering}p{0.4cm}<{\centering}p{0.4cm}<{\centering}p{0.4cm}<{\centering}p{0.4cm}<{\centering}}
    \toprule 
        node        & $v_1$ & $v_2$ & $v_3$ & $v_4$ & $v_5$ & $v_6$ & $v_7$ & $v_8$ \\
    \midrule
        $\mu_{i}$ & 10    & 12    & 8    & 7    & 9    & 4    & 6    & 15    \\ 
    \bottomrule  
    \end{tabular}
    \label{res:parameters}
\end{table}

\begin{table}[!t]
\renewcommand{\arraystretch}{1.3}
\centering
\caption{\label{res:distributions} Distributions of cluster Nodes}

\begin{tabular}{p{2cm}<{\centering}|p{3.3cm}<{\centering}p{2.2cm}<{\centering}}
    \toprule  
        & Benign & Malicious   \\
    \midrule
        Scenario 1 & $\{ \, v_1,v_2,v_3,v_4,v_5,v_6,v_7,v_8 \, \}$ & $\varnothing$\\
        Scenario 2 & $\{ \,v_4,v_5,v_6,v_7,v_8\, \}$ & $\{ \,v_1,v_2,v_3\, \}$\\
        Scenario 3 & $\{ \,v_4,v_5,v_7,v_8\, \}$ & $\{ \,v_1,v_2,v_3,v_6\, \}$\\
        Scenario 4 & $\{ \,v_4,v_5,v_6,v_8\, \}$ & $\{ \,v_1,v_2,v_3,v_7\, \}$\\
    \bottomrule  
    \end{tabular} 
\vspace{-0.3cm}
\end{table}

\vspace{-0.3cm}
\subsection{Performance in Regular Scenarios}

This subsection investigates the performance of the CSD scheme in the overall detection phase in the regular scenario under \texttt{Scenario 1} in Table~\ref{res:distributions}.
We compare the proposed CSD scheme with two well-known methods, including a trust-based scheme~(TBS)~\cite{rz16} and a statistics-based scheme~(SBS)~\cite{at19}. 
The parameter settings of SBS are the same as those of the proposed CSD scheme.
For the TBS, the parameters follow the settings in~\cite{rz16}:

\begin{itemize}
    \item The range of the trust value of a node is $[0, 200]$ and the initial trust is set as $100$.
    \item The trust value is increased by $1$ if the observed PLR of the node is less than $q_{\mathrm{n}}$; if the observed PLR of the node exceeds $q_{\mathrm{d}}$, it is decreased by $10$; otherwise, it is decreased by $1$.
\end{itemize}

\begin{figure}[!tbp]
    \centering
    \subfigure[$ q=0.185 $]{\label{res:comparison:a}\includegraphics[height=2.7cm]{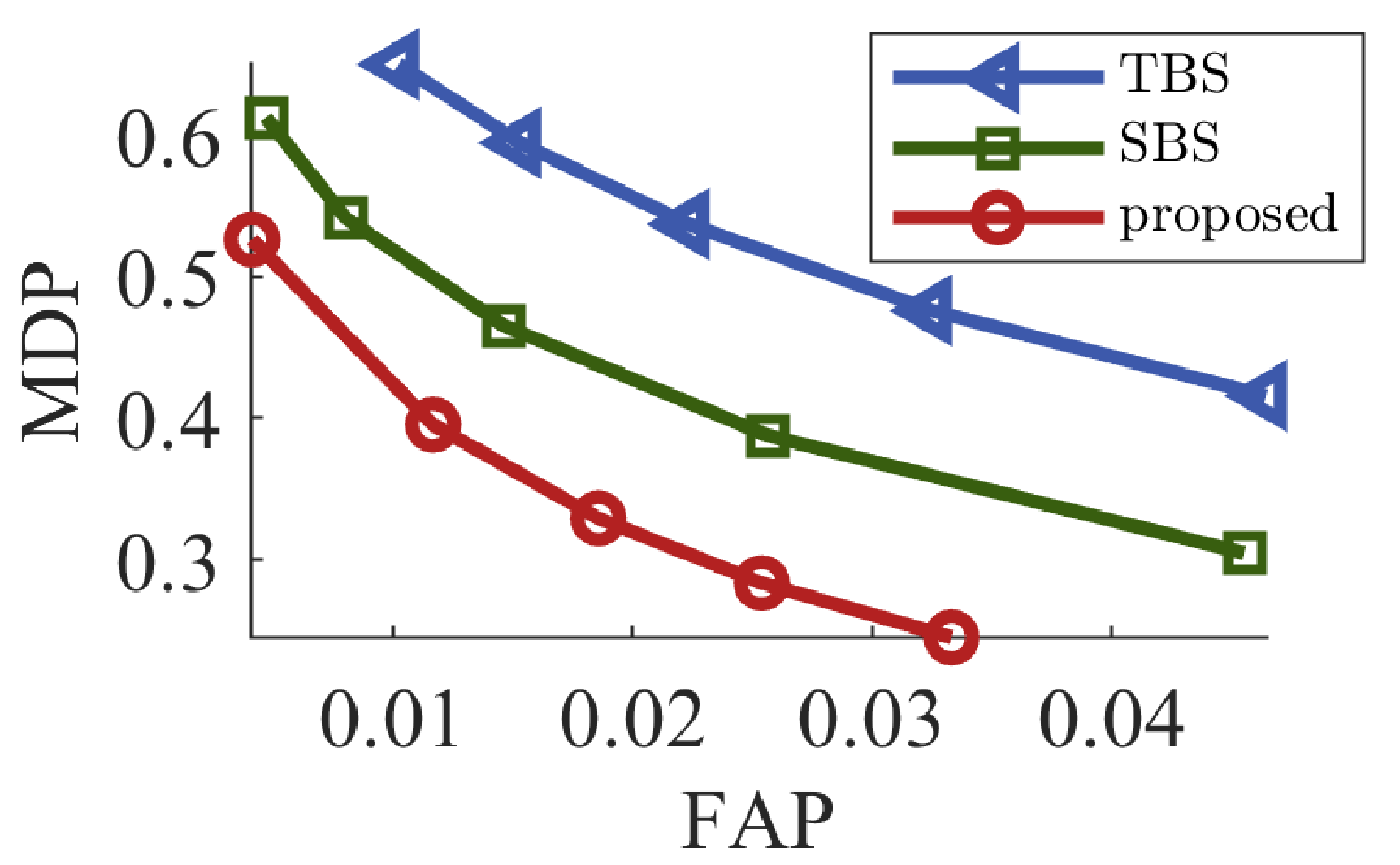}}
    \subfigure[$ q=0.215 $]{\label{res:comparison:b}\includegraphics[height=2.7cm]{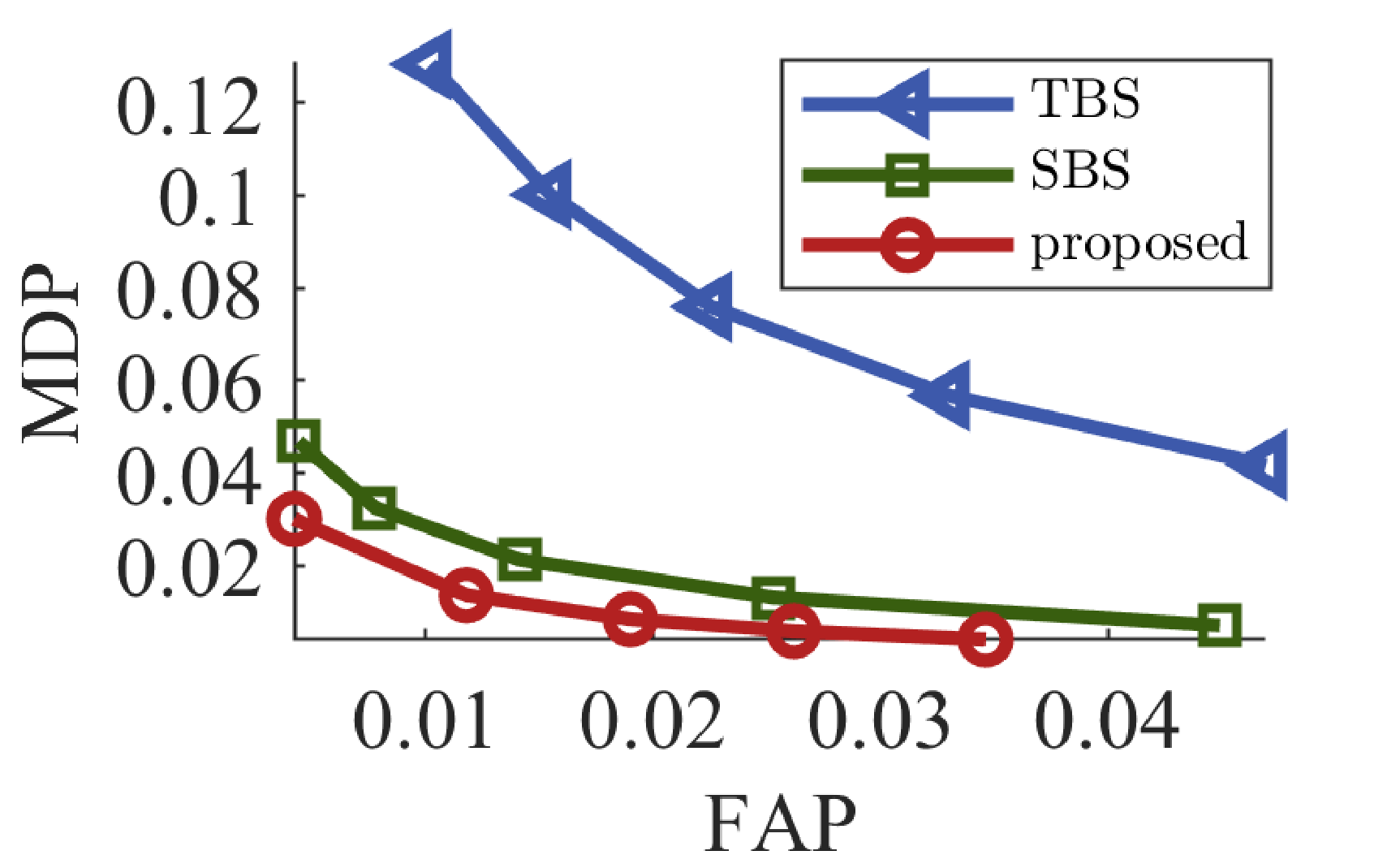}}
    \caption{FAP vs. MDP.}
    \vspace{-0.3cm}
    \label{res:comparison}
\end{figure}

Fig.~\ref{res:comparison} plots the comparison results for $t = 10$.
Upon $q_{\mathrm{n}} = 0.15$ and $q_{\mathrm{d}} = 0.2$, the CSD consistently outperforms the two baseline methods in terms of the MDP given the FAP, demonstrating the superiority of the proposed method in detecting abnormal nodes under various conditions.
Generally, the TBS is inefficient in abnormal node detection because it relies on empirical trust value update rules that cannot be changed dynamically according to the degree of abnormality.
In the SBS, the abnormal PLR is estimated through maximum likelihood estimation, which ignores prior network measurements. 
In contrast, the proposed CSD takes into account the detecting PLR of abnormal nodes, which can be directly estimated by measuring the network transmissions. 
Therefore, the proposed CSD is able to efficiently detect the abnormal node within a short detection time.
Moreover, it can be seen from the curves that the performance gap between the proposed method and the SBS in Fig.~\ref{res:comparison:a} is more evident than that in Fig.~\ref{res:comparison:b}.
As a result of the theoretical guarantees, the proposed method yields significant improvements over the SBS if the PLR of abnormal nodes is close to the normal value and exceeds the critical detection point.

Additionally, in the scenario without malicious cluster nodes, it is evident that the performance of the CSD does not surpass that of the standard LRT-based method.
However, our numerical simulation results show that the FAP and the MDP of both the LRT-based method and the CSD remain almost consistent in each period.
This implies that when no malicious cluster nodes exist, the modified Z-score method rarely identifies any outliers.
Since the curves of the FAP and the MDP of these two methods coincide, the simulation results for the LRT-based method do not appear in Fig.~\ref{res:comparison}.

\vspace{-0.2cm}
\subsection{Performance in Adversarial Scenarios}

The above has shown the effectiveness of the CSD in detecting abnormal nodes in the scenarios with true detection data.
However, the malicious cluster nodes might modify the detection data against the considered model in adversarial scenarios.
Now an evaluation is conducted to investigate the performance of the proposed CSD scheme in resisting the falsified detection data in the overall detection phase.
Specifically, three scenarios shown in Table~\ref{res:distributions}, namely \texttt{Scenario 2-4}, are considered, where the malicious cluster nodes send falsified detection data with respect to its cluster head $v_{m}$ to mislead the trusted node. 

\begin{figure}[!tbp]
\centering
\subfigure[FAP (Scenorio 2)]{\label{res:root:a}\includegraphics[height=2.7cm]{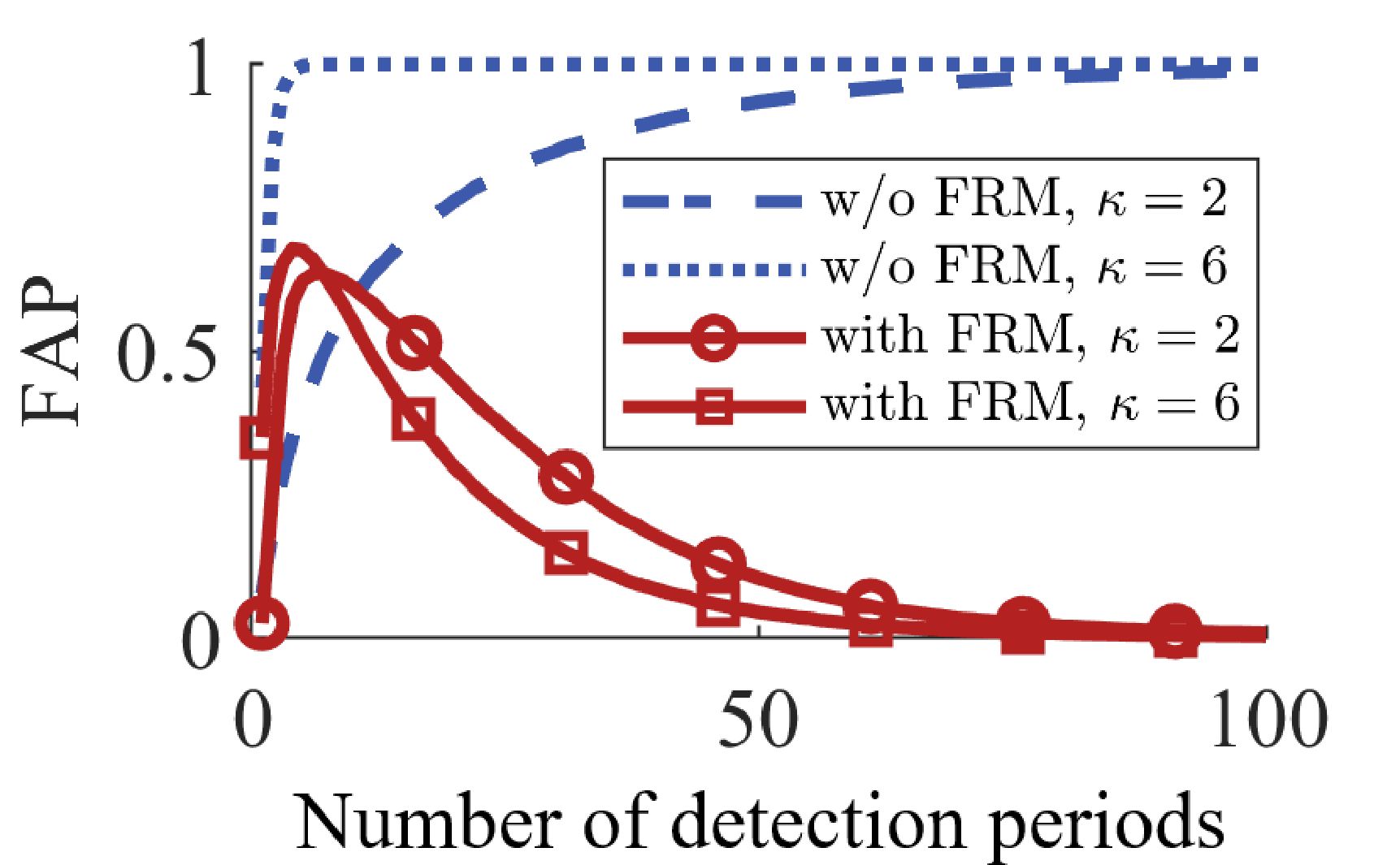}}
\subfigure[MDP (Scenorio 2)]{\label{res:root:b}\includegraphics[height=2.7cm]{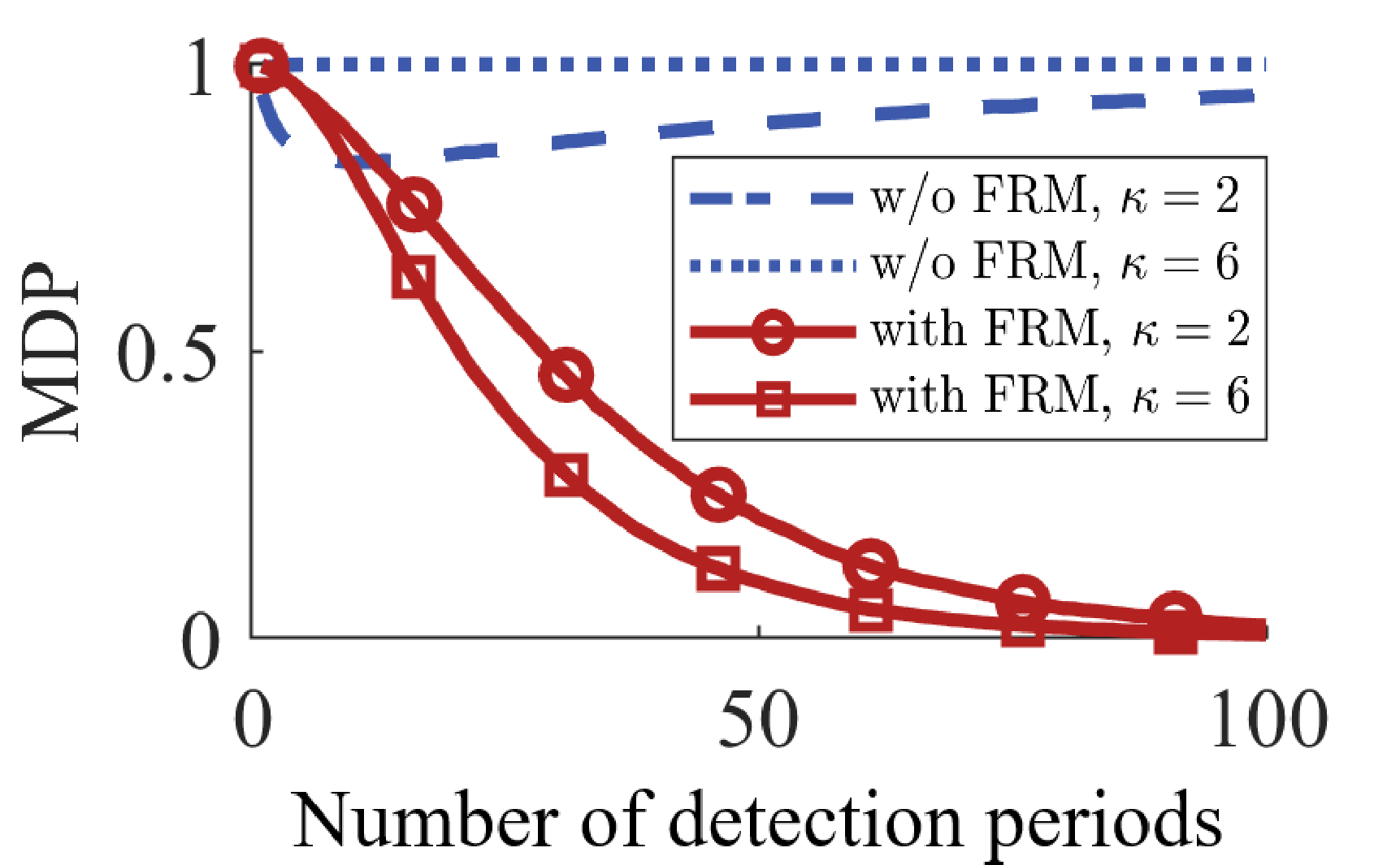}}
\subfigure[FAP (Scenorio 3)]{\label{res:root:c}\includegraphics[height=2.7cm]{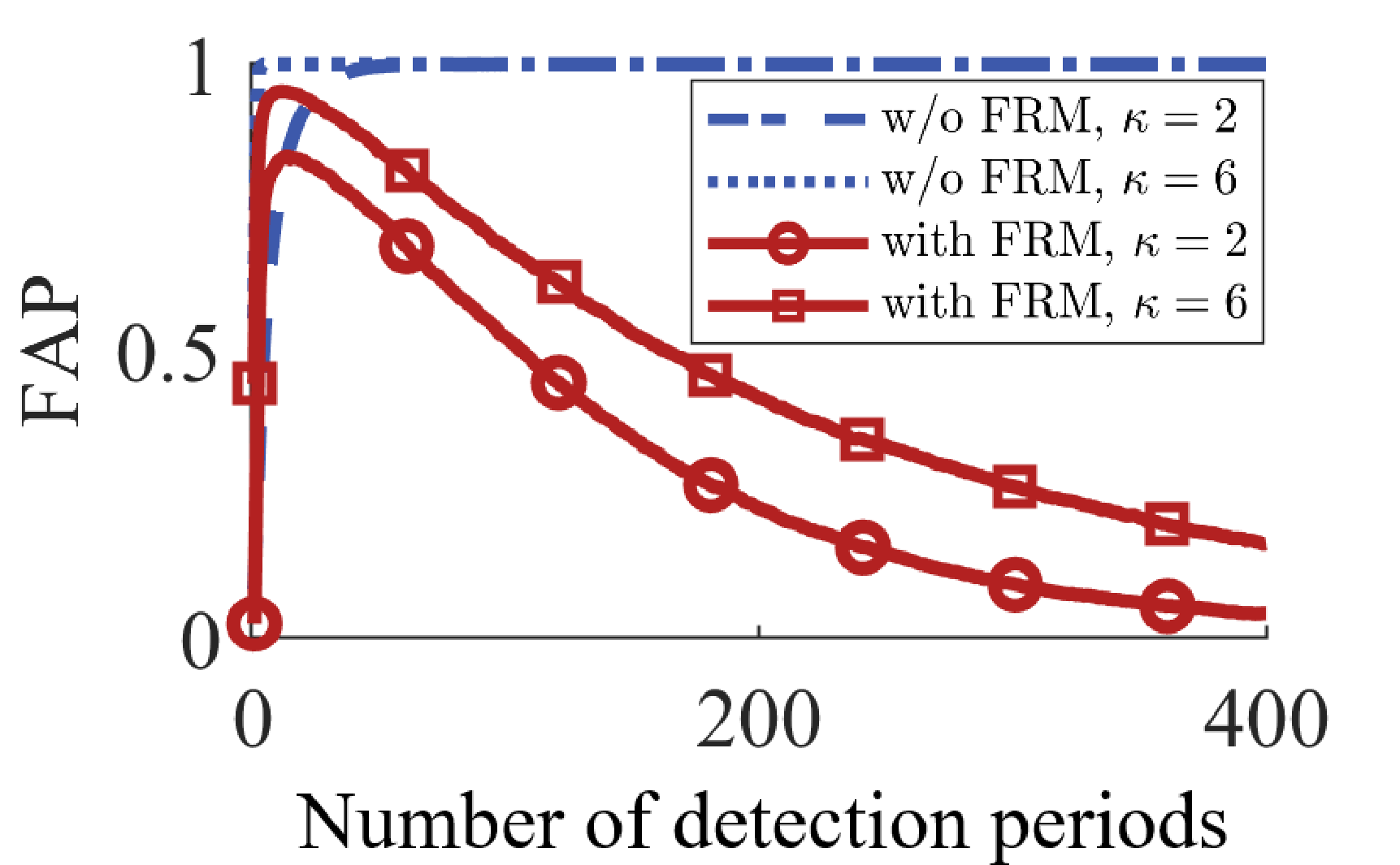}}
\subfigure[MDP (Scenorio 3)]{\label{res:root:d}\includegraphics[height=2.7cm]{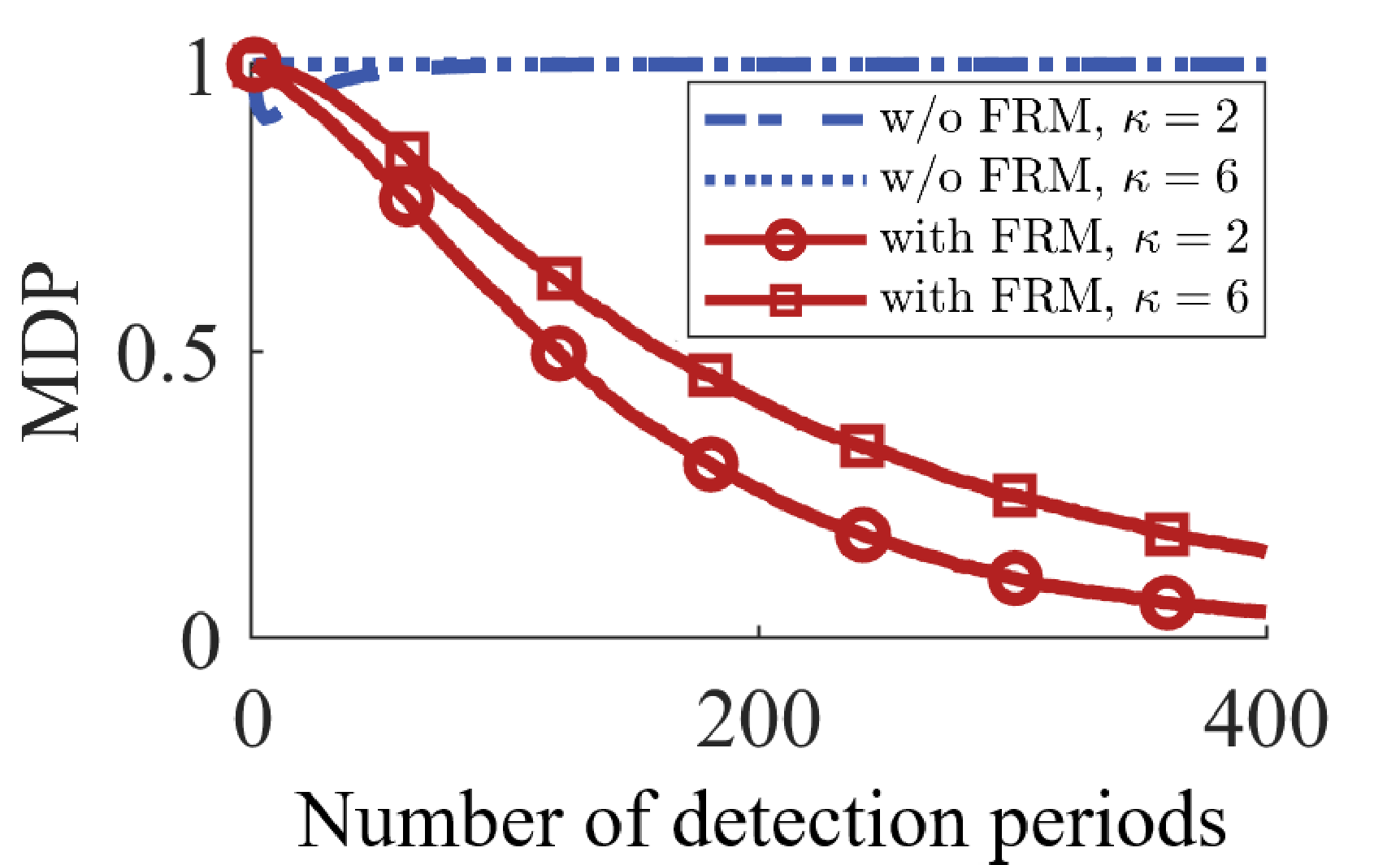}}
\subfigure[FAP (Scenorio 4)]{\label{res:root:e}\includegraphics[height=2.7cm]{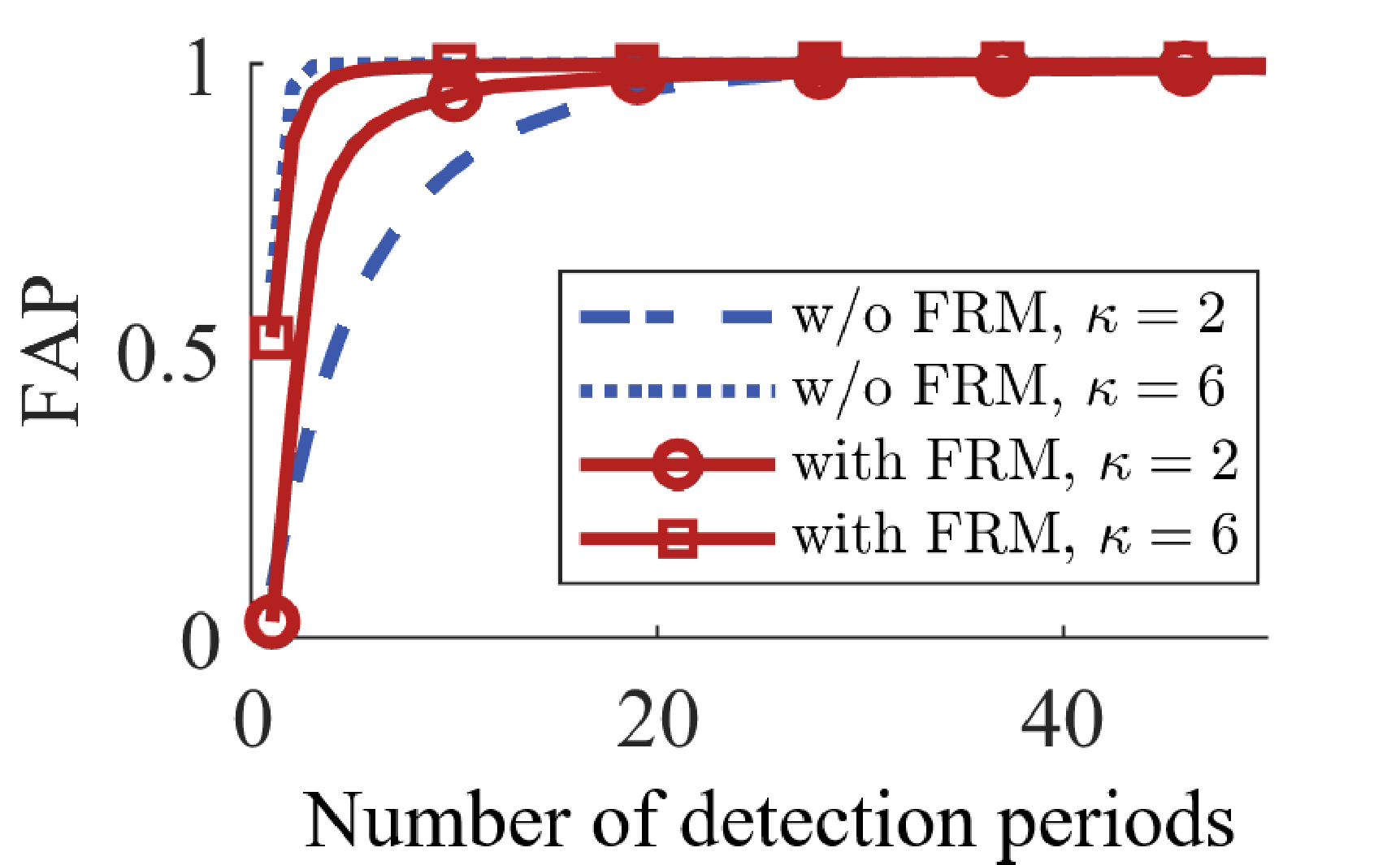}}
\subfigure[MDP (Scenorio 4)]{\label{res:root:f}\includegraphics[height=2.7cm]{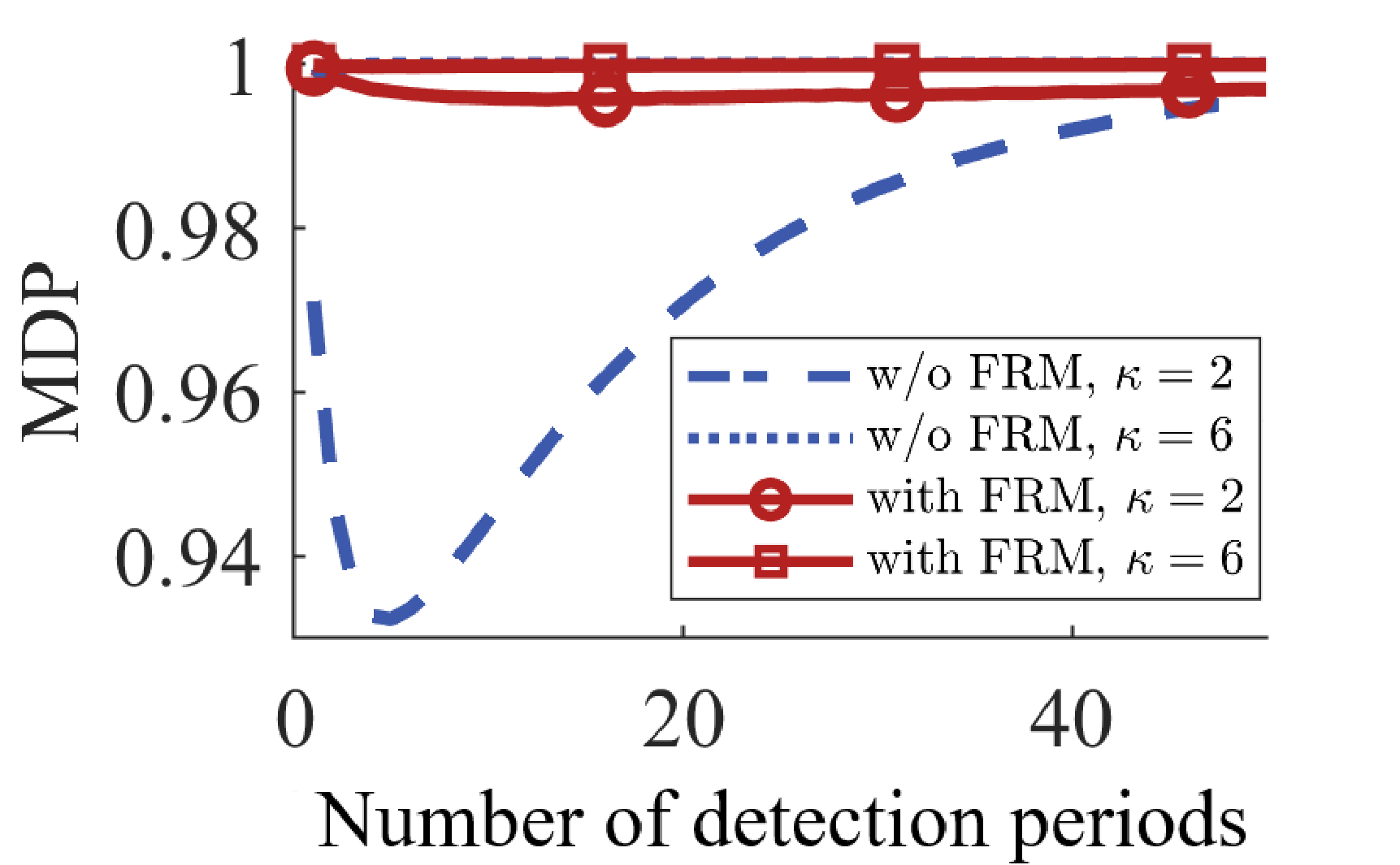}}
\caption{Performance on resisting falsified detection data. ``with FRM" -- CSD with falsification-resistance mechanism; ``w/o FRM" -- CSD without falsification-resistance mechanism. }
\vspace{-0.3cm}
\label{res:root}
\end{figure}

Fig.~\ref{res:root} presents the FAP and the MDP results in the overall detection phase for the three adversarial scenarios.
The group falsification strategy is to generate detection data with exactly opposite PLRs in null/alternative hypotheses, i.e., $q' = q_{\mathrm{d}} = 0.2$ and $q'' = q_{\mathrm{n}} = 0.15$.
The following important observations are obtained from the results.

First, the FAP and the MDP tend to 1 when the CSD does not incorporate the falsification-resistance mechanism.
Therefore, it cannot distinguish abnormal from normal nodes without removing untrustworthy LLRs, as proved in Lemma~\ref{lemma:mu}.
By contrast, the results in the first two scenarios~(Fig.~\ref{res:root:a}-\ref{res:root:d}) indicate that the proposed falsification-resistance mechanism ensures the CSD on resisting the falsified detection data and can yield up to almost $99.9\%$ improvement of the FAP and the MDP, i.e., the perfect detection is achieved.
As can be found, it would take a shorter time to achieve high detection accuracy in \texttt{Scenario 2}~(cf. Fig.~\ref{res:root:a} and Fig.~\ref{res:root:b}) than in \texttt{Scenario 3}~(cf. Fig.~\ref{res:root:c} and Fig.~\ref{res:root:d}) because there are less falsified detection data in \texttt{Scenario 2}.
However, it cannot detect anomalies in the last scenario (Fig.~\ref{res:root:e} and Fig.~\ref{res:root:f}).
This is because $\mu^{(0)}/\mu^{(1)}>1$ is not held in \texttt{Scenario 4}, i.e., the perfect detection is not satisfied as proved in Theorem~\ref{CSD:theorem3}.
These results suggest that the proposed CSD scheme is robust to the adversarial attacks due to the falsification-resistant mechanism and the LLR aggregation mechanism integrated at the trusted node, as explained in Eq.~\eqref{CSD:removed detection rule}.

\begin{table}[!tbp]
\renewcommand{\arraystretch}{1.25}
  \centering
  \caption{Detection Periods Required for a FAP Value}
  \setlength{\tabcolsep}{4mm}{
    \begin{tabular}{c|c|ccc}
    \specialrule{.08em}{0pt} {0pt}
    \multirow{2}{*}{$\kappa$} & \multirow{2}{*}{$q'$} & \multicolumn{3}{c}{Detection Periods} \\
\cline{3-5}          &       &$\text{FAP}=0.10$ & $\text{FAP}=0.05$ & $\text{FAP}=0.01$ \\
    \specialrule{.05em}{0pt} {0pt}
    \multirow{3}{*}{2} & 0.2   &   49    &   62    &  90 \\
          & 0.4   &   35    &   47    &  77\\
          & 0.8   & 34    & 46    & 76 \\
    \specialrule{.05em}{0pt} {0pt}
    \multirow{3}{*}{20} & 0.2   &   34    &    46   &  76\\
          & 0.4   &   35    &   47    &  79 \\
          & 0.8   &  36    &  48    & 80 \\
    \specialrule{.08em}{0pt} {0pt}
    \end{tabular}}
    %\vspace{-0.3cm}
  \label{res:PFA}%
\end{table}%

\begin{table}[!t]
\renewcommand{\arraystretch}{1.25}
  \centering
  \caption{Detection Periods Required for a MDP Value}
  \setlength{\tabcolsep}{3.45mm}{
    \begin{tabular}{c|c|ccc}
    \specialrule{.08em}{0pt} {0pt}
\multirow{2}{*}{$\kappa$} & \multirow{2}{*}{$q''$} & \multicolumn{3}{c}{Detection Periods} \\
\cline{3-5}          &       &$\text{MDP}=0.10$ & $\text{MDP}=0.05$ & $\text{MDP}=0.01$ \\
    \specialrule{.05em}{0pt} {0pt}
    \multirow{3}{*}{2} & 0.05   &   65  &   81    & 112 \\
          & 0.1   &   49    &   61    &  88\\
          & 0.15   & 43    & 53    & 77 \\
    \specialrule{.05em}{0pt} {0pt}
    \multirow{3}{*}{20} & 0.05   &   36    &   49    &  75 \\
          & 0.1   &   35    &   48    &  77\\
          & 0.15   & 35    & 47    & 78 \\
    \specialrule{.08em}{0pt} {0pt}
    \end{tabular}}
    \vspace{-0.3cm}
  \label{res:PMD}%
\end{table}%

Second, the FAP and the MDP converge faster for a larger value of $\kappa$~(e.g., $\kappa=6$) in \texttt{Scenario 2}.
Conversely, a smaller value of $\kappa$ (e.g., $\kappa = 2$) results in more rapid convergence in \texttt{Scenario 3}.
The main difference between \texttt{Scenario 2} and \texttt{Scenario 3} lies in the value of $\mu^{(0)}/\mu^{(1)}$ -- the ratio of the means of packets from benign cluster nodes to malicious ones.
And, the detection efficiency of the CSD is influenced by both $\mu^{(0)}/\mu^{(1)}$ and $\kappa = \ell'(t)/\ell(t)$ -- the ratio of the falsified number to true number of packets. 
On one hand, from the condition in Eq.~\eqref{CSD:condition}, it requires fewer detection periods to reach the same performance when $\tilde{\mu}^{(0)}/\tilde{\mu}^{(1)}$ is closer to $\kappa \cdot \max \{\frac{q'- \beta}{\beta - q_{\mathrm{n}}}, \frac{\beta - q''}{q - \beta}\}$, where $\tilde{\mu}^{(0)}$ and $\tilde{\mu}^{(1)}$ are $\mu^{(0)}$ and $\mu^{(1)}$ after filtering out the cluster nodes that generate untrustworthy detection data.
Indeed, for a fixed value of $\mu^{(0)}/\mu^{(1)}$, $\tilde{\mu}^{(0)}/\tilde{\mu}^{(1)}$ grows with $\kappa$, because it is easier to find out the untrustworthy detection data.
On the other hand, both $\kappa \cdot \frac{q'-\beta}{\beta - q_{\mathrm{n}}}$ and $\kappa \cdot \frac{\beta-q''}{q - \beta} $ increase if $\kappa$ increases.
Thus, there is a trade-off between $\mu^{(0)}/\mu^{(1)}$ and $\kappa$ to determine the number of detection periods.

Finally, a series of numerical simulations is conducted to further understand the influence of different falsified PLRs on detection accuracy.
More specifically, the CSD's performance in terms of the FAP and the MDP under \texttt{Scenario 2} is analyzed by performing a grid search for the falsified PLRs $q' \in \{\, 0.2,0.4,0.8 \,\}$ and $q'' \in \{\, 0.05,0.1,0.15 \,\}$. 
The results are reported in Table~\ref{res:PFA} and Table~\ref{res:PMD}.

Table~\ref{res:PFA} shows that, when $\kappa=2$, fewer detection periods are required if $q'$ increases; when $\kappa = 20$, however, the trend is reversed.
As observed in Fig.~\ref{res:root:a} and Fig.~\ref{res:root:b}, the detection periods would be fewer for a larger $\kappa$ if $q'$ is fixed.
However, a higher value of $q'$ would reduce $\frac{\beta - q'}{q_{\mathrm{n}}-\beta}$.
Hence, there also exists a trade-off between $\kappa$ and $q'$.
Similarly, there exists another trade-off between $\kappa$ and $q''$ for MDP, which can be verified by the results presented in Table~\ref{res:PMD}.
Moreover, it also can be seen that the falsified PLR does not significantly influence the detection accuracy, especially for a larger $\kappa$, because the effect of $q''$ is neutralized by a larger value of $\kappa$.
In addition, the FAP and the MDP of the proposed robust CSD can converge to $99\%$ within a short detection time and even close to $100\%$ for a sufficiently long period.

In summary, the proposed CSD can correctly and rapidly detect abnormal packet loss and identify abnormal nodes by combining the LRT-based detection method with the modified Z-score-based falsification resistance mechanism at the trusted node, regardless of the degree of maliciousness.

\section{Conclusion}

This paper presented CSD -- a novel statistics-based framework for detecting abnormal nodes while resisting falsified detection data in clustered networks. 
A modified Z-score-based method was embedded to distinguish the falsifications. 
A comprehensive theoretical analysis has been performed to prove the effectiveness and accuracy of our scheme under the adversarial scenario where the scheme is public to attackers.
We modeled this security problem as a complete information static game and derived the optimal removal threshold of the Z-score method. 
Besides, we also proved that both the FAP and the MDP of the CSD can approach zero exponentially. 
Extensive simulations verified the superiority of our CSD.

\balance
\bibliographystyle{IEEEtran}
\bibliography{hope.bib}

\end{document}